\newtheorem{theorem}{Theorem}[section]
\newtheorem{proposition}[theorem]{Proposition}
\newtheorem{corollary}[theorem]{Corollary}
\newtheorem{conjecture}[theorem]{Conjecture}
\theoremstyle{definition}
\newtheorem{definition}[theorem]{Definition}
\theoremstyle{remark}
\newtheorem*{remark}{Remark}
\begin{document}
\linespread{1.25}\selectfont

\preprint{Project Report}

\title{State-independent all-versus-nothing arguments}

\author{Boseong Kim}
\affiliation{
  Department of Physics, University College London,\\Gower Street, London, WC1E 6BT, United Kingdom
}

\author{Samson Abramsky}
  \email{s.abramsky@ucl.ac.uk}
\affiliation{
  Department of Computer Science, University College London,\\Gower Street, London, WC1E 6BT, United Kingdom
}

\date{\today}

\begin{abstract}
Contextuality is a key feature of quantum information that challenges classical intuitions, providing the basis for constructing explicit proofs of quantum advantage. While a number of evidences of quantum advantage are based on the contextuality argument, the definition of contextuality is different in each research, causing incoherence in the establishment of instant connection between their results. In this report, we review the mathematical structure of sheaf-theoretic contextuality and extend this framework to explain Kochen-Specker type contextuality. We first cover the definitions in contextuality with detailed examples. Then, we state the all-versus-nothing (AvN) argument and define a state-independent AvN class. It is shown that Kochen-Specker type contextuality, or contextuality in a partial closure, can be translated into this framework by the partial closure of observables under the multiplication of commuting measurements. Finally, we compare each case of contextuality in an operator-side view, where the strict hierarchy of contextuality class in a state-side view seems to merge into the state-independent AvN class together with the partial closure formalism. Overall, this report provides a unified interpretation of contextuality by integrating Kochen-Specker type notions into the state-independent AvN argument. The results present novel insights into contextuality, which pave the way for a coherent approach to constructing proofs of quantum advantage.
\end{abstract}

\maketitle

\newpage

\tableofcontents

\newpage

\section{Introduction}

So, why do we need a quantum computer? This question must be one of the most frequently asked questions for those who research quantum computers. The claim of exponential quantum advantage in the Deutsch-Jozsa algorithm \cite{DJ92} and the monumental factoring algorithm by Shor \cite{Sho94} opened an intriguing discussion on quantum advantage. They inspired numerous computational tasks for quantum information systems, for example, the GHZ \cite{GHZ90} and magic square games \cite{CHTW10}, quantum search algorithm \cite{Gro96}, and variational quantum eigensolver (VQE) \cite{PMS+14} for quantum simulation and optimization.

Those cases of quantum advantage are crucial for the development of quantum technologies since they can spark a totally novel area of application, opening new markets and research fields for quantum scientists. However, discovering and proving a new instance of quantum advantage are challenging works due to the complexity and counter-intuitive nature of quantum mechanics.

Contextuality is a useful tool to characterize the specific point where quantum information systems depart from classical systems. Originating from Bell's pioneering work on the non-local behavior of quantum mechanics \cite{Bel64}, Kochen and Specker \cite{KS75} characterized the idea of contextuality based on the problem of a hidden variable theory. Subsequent investigations on quantum observables were made by Mermin \cite{Mer90, Mer93} who first used the term \textit{contextuality} in his articles. Recent attempts to apply mathematical structures to contextuality \cite{AB11, AMB12, Abr14, ABK+15, ABCP17, ABM17, ABC+18, Aas22, Rau13, BDB+17, ORBR17, CSW14, KWB18, KL19} have gained a certain degree of success, producing proofs of quantum advantage in some computational processes \cite{Aas22, BDB+17, KWB18}.

\subsection{Applications of contextuality to quantum advantage}

Contextuality as a resource of quantum advantage was first discussed by Raussendorf \cite{Rau13} in a measurement-based quantum computing model (MBQC). His major results state that an MBQC requires contextuality when it computes a non-linear Boolean function with a certain probability. This idea was developed by Bermejo-Vega et al. \cite{BDB+17}, establishing the explicit proof that contextuality is a necessary resource for a large class of MBQC schemes.

It has also been observed that contextuality plays a crucial role in a gate-based quantum computing model. Bravyi, Gosset, and K\"{o}nig \cite{BGK18} reported the quantum advantage of shallow circuits, which was the first mathematical proof of an unconditional quantum advantage for a certain class of quantum circuits. Here they considered strategies for non-local games recast into the circuit, proving a separation between classical and quantum computational models. Subsequently, this strategy was shown to be noise robust \cite{BGKT20} and extended to distributed non-local games \cite{Aas22}.

Karanjai, Wallman, and Bartlett \cite{KWB18} also developed their own framework of contextuality to prove that the spatial complexity of classical simulation of a quantum measurement process is bounded by contextuality. This result is further developed by Kirby and Love \cite{KL19}, who suggested applying contextuality to evaluate the quantum advantage of the VQE algorithm. They proposed a test to determine whether or not the given objective function for the VQE is contextual, employing the compatibility graph of Pauli operators and the measure, contextual $p$-distance. This test detects the non-classicality of the objective function, thus filtering out classically simulatable procedures. This research has important implications for the application of contextuality to evaluate practical quantum algorithms.

\subsection{Classification of contextuality}

The sheaf-theoretic definition of contextuality has been instrumental in our understanding of contextuality, as it provides precise mathematical structure to the intuitive concept of contextuality. The sheaf-theoretic framework was first proposed by Abramsky and Brandenburger \cite{AB11, Abr14}, where they defined events and distributions on the measurement scenario and identified the sheaf structure of those concepts. Here, one can connect the global distribution to the hidden-variable model, which is well-known for its failure to explain the distinct features of quantum theory. Further discussion by Abramsky, Barbosa, and Mansfield \cite{ABM17} investigated a measure of contextuality. This work opened the way to quantify contextuality in the given quantum scenario.

The subsequent development in the cohomological approach to contextuality also provides a substantial methodology for witnessing contextuality in a given measurement scenario. Abramsky, Mansfield, and Barbosa \cite{AMB12} proposed the approach based on the \v{C}ech cohomology invariant, which leverages powerful tools of sheaf cohomology to detect contextuality in the empirical model. The proposal by Okay, Roberts, Bartlett, and Raussendorf \cite{ORBR17} established the topological approach to identifying contextuality, which has the potential to provide a more refined analysis, although an additional topological structure must be concerned. Those approaches were connected by Aasnæss \cite{Aas22}, complementing generality and completeness in each approach by translating arguments from one to another.

On the other hand, a stronger form of contextuality, namely, the all-versus-nothing (AvN) argument was also characterized by the same group. Abramsky et al. \cite{ABK+15, ABCP17} formalized the logical inconsistency in quantum information systems into an AvN argument referring back to the observation by Mermin \cite{Mer90, Mer93}. This class of contextuality is also observed as an obstruction in the cohomology group in the work by Aasnæss \cite{Aas22}.

While the sheaf-theoretic framework provides the base of arguments in the quantum advantage of MBQCs and shallow circuits, the last case of the application, Refs.~\onlinecite{KWB18, KL19}, roots back to Kochen and Specker's framework on formalizing contextuality, so-called contextuality in a closed sub-theory. This notion seems to describe the same idea with the sheaf-theoretic contextuality, but it is solely based on arguments on operator algebra of observables. Moreover, the interesting point is that they bring the 2-qubit measurement scenario of certain observables into the stronger class of contextuality, while the 2-qubit scenario was previously claimed to be unable to show strong non-locality \cite{ABC+18}. This confliction clearly motivated our research on the classification of contextuality, giving a specific case for the two different points of view. Jumping into the result, it turns out that Kochen-Specker type contextuality actually deals with an abelian partial group generated by the given set of observables. Furthermore, we do find strong contextuality in this generated abelian partial group, \textit{i.e.}, in a partial closure.

In this report, we review the mathematical structure of the sheaf-theoretic formalism of contextuality and refine the framework to characterize the connecting point to Kochen-Specker type contextuality. We first go through strict definitions of measurement scenarios, measurement covers, events, distributions, and empirical models. Under these definitions, we review the connection between contextuality and a sheaf theory. Then, we further develop those ideas to classify strong contextuality and all-versus-nothing (AvN) arguments, which define stronger classes of contextuality. The main proposal of this report is the definition of a state-independent AvN class, which is proven in the text that induces AvN arguments for any state realizing the measurement scenario. Then, we show that Kochen-Specker type contextuality is translated into the state-independent AvN in a partial closure, by considering an abelian partial group structure of a Pauli $n$-group. Finally, we summarize the classes of contextuality in a state-side view and an operator-side view, where we can notice that state-dependent contextual scenarios seem to merge into the state-independent AvN class in a partial closure.

\section{Sheaf-theoretic contextuality}

The \textit{sheaf theoretic} framework of Abramsky and Brandenberger \cite{AB11} provides a mathematical structure to formalize the concept of contextuality. In this section, we go through the definitions used in the sheaf theoretic approach with the example of the Bell scenario. We also discuss the contextual fraction as a useful concept for quantifying contextuality, referring to the paper by Abramsky, Barbosa, and Mansfield \cite{ABM17}.

\subsection{Measurement scenarios}

\begin{definition}
A \textit{measurement scenario} is a triple $\left< X, \mathcal{M}, O \right>$ where:
\begin{itemize}
    \item $X$ is a finite set of \textit{measurements};
    \item $\mathcal{M} \subset \mathcal{P}(X)$ is a family of \textit{measurement contexts}, where each context $C \in \mathcal{M}$ represents a set of measurements that can be performed together;
    \item $O$ is a finite set of \textit{outcomes}.
\end{itemize}
\end{definition}

For example, the well-known Bell scenario, where two experimenters, Alice and Bob, can each choose between performing one of two different measurements, say, $a_1$ or $a_2$ for Alice and $b_1$ or $b_2$ for Bob, obtaining one of two possible outcomes, is represented as follows:
\begin{eqnarray*}
& X = \{ a_1, a_2, b_1, b_2 \}, \quad O = \{ 0, 1 \}, \\
& \mathcal{M} = \left\{ \{ a_1, b_1 \}, \{ a_1, b_2 \}, \{ a_2, b_1 \}, \{ a_2, b_2 \} \right\}.
\end{eqnarray*}
Here, the measurement contexts $C \in \mathcal{M}$ explains that we are allowed to measure either $a_1$ or $a_2$, but not both (same for $b_1$ and $b_2$). We will generally assume a measurement scenario $\left< X, \mathcal{M}, O \right>$ for the rest of this paper. We also refer to Bell-type measurement scenarios in a tuple $(n, k, l)$, where $n$ is the number of agents, $k$ is the number of possible measurements for each agent, and $l$ is the number of possible outcomes. Thus, the example is a $(2,2,2)$ scenario.

From the example, we also capture the idea that there needs some additional information about $\mathcal{M}$ to describe situations in quantum mechanics. While we described $\mathcal{M}$ as consisting of ``sets of measurements that can be performed together,'' it remains unclear how many (or how few) sets are needed fully characterize a given scenario. Hence, we aim to collect a sufficient number of sets to cover all possible measurements, while keeping them as compact as possible to represent each set of compatible measurements. Here comes the definition of the measurement cover.

\begin{definition}
A \textit{measurement cover} $\mathcal{M}$ on the set $X$ of measurements is a family of measurement contexts such that:
\begin{itemize}
    \item $\mathcal{M}$ covers $X$: $\displaystyle \bigcup_{C \in \mathcal{M}} C = X$;
    \item $\mathcal{M}$ is an \textit{anti-chain}: If $C, C' \in \mathcal{M}$ and $C \subset C'$ then $C = C'$.
\end{itemize}
\end{definition}

We think of the anti-chain condition because we shall focus on the \textit{maximal} compatible sets of measurements.

\subsection{Events}

\begin{definition}
Given a set of measurements $X$ and a set of outcomes $O$, an \textit{event} or an \textit{assignment over $U \subset X$} is a function $s: U \rightarrow O$.
\end{definition}

For example, consider the event of observing $a_1$ and $b_1$ in the Bell scenario and obtaining outcomes 0 and 1, each. We represent this event by a tuple $(a_1 \mapsto 0, b_1 \mapsto 1)$, or simply, $(0, 1)$. Here we further focus on the mathematical structure of events. We must strictly distinguish events $(a_1 \mapsto 0, b_1 \mapsto 1)$ and $(a_1 \mapsto 0)$. In the latter case, there is no information about what we measured on Bob's side. It means not only that we do not know the outcome Bob obtained, but also that we even cannot say which measurement either $b_1$ or $b_2$ Bob actually measured. It implies that the events are given with respect to the set $U$ of measurements as well as assigning a value $o \in O$ for each measurement $x \in U$. In fact, we can define this set of events as a categorical functor including the restriction map, which formalizes ``forgetting'' information out of what is concerned in the smaller context.

\begin{definition}
Given a set of measurements $X$ and a set of outcomes $O$, a functor $\mathcal{E}: \mathcal{P}(X)^{\mathsf{op}} \rightarrow \mathbf{Set}$, called the \textit{event sheaf}, is defined as follows:
\begin{itemize}
    \item $\displaystyle \forall U \subset X$, $\mathcal{E}(U) := \prod_{x \in U} O$;
    \item $\forall U, U' \subset X$ and $U \subset U'$, a map $\mathsf{res}^{U'}_{U}: \mathcal{E}(U') \rightarrow \mathcal{E}(U)$ of events is defined by the usual functional restriction $\mathsf{res}^{U'}_{U}(s) = \left. s \right| _{U}$.
\end{itemize}
\end{definition}

\begin{figure}
\includegraphics[width=\textwidth]{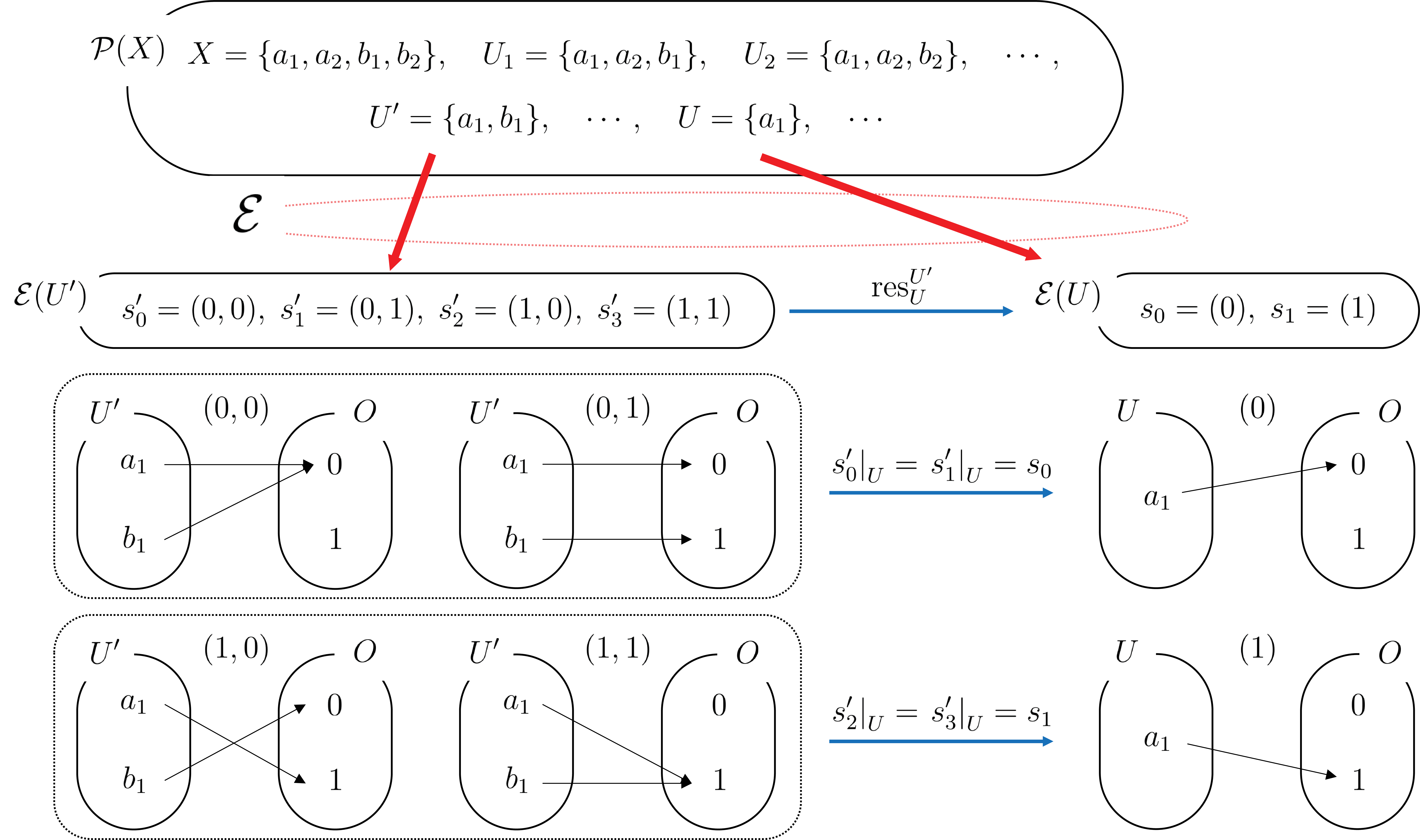}
\caption{\label{fig:Event_sheaf}Visualization of an event sheaf. An event sheaf is a functor consisting of a mapping from $\mathcal{P}(X)$ to a set of events and a restriction mapping between sets of events.}
\end{figure}

\begin{remark}
$\mathbf{Set}$ denotes the category of sets and functions. Here, it is composed of sets of events and restriction maps. $\mathcal{P}(X)$ denotes the power set category of $X$, conventionally composed of sets of subsets of $X$ and inclusion maps. $\mathcal{P}(X)^{\mathsf{op}}$ denotes its opposite, having projection maps instead of inclusion maps. Thus, for $U \subset U'$, a projection $\pi: U' \rightarrow U$ maps to $\mathsf{res}^{U'}_{U}: \mathcal{E}(U') \rightarrow \mathcal{E}(U)$ by $\mathcal{E}$. Refer to Ref.~\onlinecite{HV19} to study categorical notions used for quantum theory.
\end{remark}

Fig.~\ref{fig:Event_sheaf} illustrates an example of an event sheaf defined in the Bell scenario. $\mathcal{E}$ basically maps each subset $U \subset X$ to a set of functions, what we call events on $U$. Each event $s$ is a mapping from $U$ to $O$ such that assigns an outcome for each measurement $x \in U$.

\subsection{Distributions and empirical models}

\begin{definition}
For any set $X$ and a semiring $R$, the \textit{support} $\mathsf{supp}(\phi)$ of a function $\phi: X \rightarrow R$ is defined by:
\begin{equation*}
\mathsf{supp}(\phi) = \left\{ x \in X \mid \phi(x) \neq 0 \right\}.
\end{equation*}
\end{definition}

\begin{definition}
For any set $X$ and a semiring $R$, an \textit{$R$-distribution} on $X$ is a function $d: X \rightarrow R$ which has finite support, and is such that:
\begin{equation*}
\sum_{x \in X} d(x) = 1.
\end{equation*}
\end{definition}

\begin{definition}
For any sets $X, Y$ and a semiring $R$, $\mathcal{D}_{R}$ is a functor defined as follows:
\begin{itemize}
    \item $\mathcal{D}_{R}(X)$ is the set of $R$-distributions on $X$;
    \item given a function $f: X \rightarrow Y$, the action of $\mathcal{D}_{R}$ on $f$ is defined by:
    \begin{equation*}
    \mathcal{D}_{R}(f): \mathcal{D}_{R}(X) \rightarrow \mathcal{D}_{R}(Y):: d \mapsto \left[ y \mapsto \sum_{f(x)=y} d(x) \right].
    \end{equation*}
\end{itemize}
\end{definition}

Now we can compose a functor $\mathcal{D}_R: \mathbf{Set} \rightarrow \mathbf{Set}$ with the event sheaf $\mathcal{E}: \mathcal{P}(X)^{\mathsf{op}} \rightarrow \mathbf{Set}$ to form a functor $\mathcal{D}_R\mathcal{E}: \mathcal{P}(X)^{\mathsf{op}} \rightarrow \mathbf{Set}$. For each $U \subset X$ A distribution $d \in \mathcal{D}_R\mathcal{E}(U)$ maps $\mathcal{E}(U) \rightarrow R$.

\begin{table*}[tbp]
\caption{\label{tab:Empirical_CHSH}Empirical model on $\mathrm{CHSH} - (2, 2, 2)$ scenario. The table is obtained from local projective measurements equatorial at angles 0 for $a_1$, $b_1$ and $\pi/3$ for $a_2$, $b_2$ on the two-qubit Bell state $\left| \phi^+ \right>$.}
\begin{tabular}{c|c c c c}
& $(0,0)$ & $(0,1)$ & $(1,0)$ & $(1,1)$ \\
\colrule
$\{a_1, b_1\}$ & 1/2 & 0   & 0   & 1/2 \\
$\{a_1, b_2\}$ & 3/8 & 1/8 & 1/8 & 3/8 \\
$\{a_2, b_1\}$ & 3/8 & 1/8 & 1/8 & 3/8 \\
$\{a_2, b_2\}$ & 1/8 & 3/8 & 3/8 & 1/8 \\
\end{tabular}
\end{table*}
For example, consider the Bell scenario again. Table~\ref{tab:Empirical_CHSH} realizes the Clauser-Horne-Shimony-Holt (CHSH) model, obtained from local projective measurements equatorial at angles 0 for $a_1$, $b_1$ and $\pi/3$ for $a_2$, $b_2$ on the two-qubit Bell state. In the table, each row represents a context $C$, and each column represents an event $s \in \mathcal{E}(C): C \rightarrow O$ indexed by the outcome for each measurement. Then the first row of the table is a distribution $d \in \mathcal{D}_{R}\mathcal{E}(\{a_1, b_1\})$ mapping each event $s \in \mathcal{E}(\{a_1, b_1\})$ to the semiring $\mathbb{R}_{\geq 0}$ as follows:
\newpage
\begin{eqnarray*}
d((0, 0)) & = & 1/2 \\
d((0, 1)) & = & 0 \\
d((1, 0)) & = & 0 \\
d((1, 1)) & = & 1/2.
\end{eqnarray*}

\begin{remark}
Given $U \subset U'$, $\mathcal{D}_{R}\mathcal{E}$ maps the projection map $\pi: U \rightarrow U'$ to:
\begin{equation*}
\mathcal{D}_{R}\mathcal{E}(U') \rightarrow \mathcal{D}_{R}\mathcal{E}(U):: d \mapsto \left. d \right| _{U},
\end{equation*}
where for each events $s \in \mathcal{E}(U)$:
\begin{equation*}
\left. d \right| _{U} (s) = \sum_{s' \in \mathcal{E}(U'), \left. s' \right| _{U} = s} d(s').
\end{equation*}
Thus $\left. d \right| _{U}$ is the \textit{marginal} of the distribution $d$, which assigns to each event $s$ in the smaller context $U$ the sum of weights of all events $s'$ in the larger context which restrict to $s$.
\end{remark}

In the later part of this paper, it turns out to be obvious that the functor $\mathcal{D}_R\mathcal{E}$ cannot be a sheaf on the semirings we are concerned. However, we actually deal with a specific family of distributions when we look into a quantum system. This concept is characterized by an \textit{empirical model}.

\begin{definition}
Given a measurement cover $\mathcal{M}$, a \textit{no-signaling empirical model} $e$ for $\mathcal{M}$ is a compatible family of distributions such that:
\begin{itemize}
    \item for any measurement context $C \in \mathcal{M}$, $e_C$ is a local distribution of $\mathcal{D}_R\mathcal{E}$ at $C$, \textit{i.e.} $e_C \in \mathcal{D}_R\mathcal{E}(C)$;
    \item $e$ is \textit{compatible} for any measurement contexts $C, C' \in \mathcal{M}$, \textit{i.e.} $\left. e_C \right| _{C \cap C'} = \left. e_{C'} \right| _{C \cap C'}$.
\end{itemize}
The compatibility condition corresponds to the concept no-signaling in the sense that the choice of context, $C$ or $C'$, does not affect the local distribution at $C \cap C'$.
\end{definition}

Now for our Bell scenario, we can show that it is an empirical model $e$ for a measurement cover $\mathcal{M} = \left\{ \{ a_1, b_1 \}, \{ a_1, b_2 \}, \{ a_2, b_1 \}, \{ a_2, b_2 \} \right\}$ where $e_C$ is the distribution specified by each row of the table. The compatibility condition is verified by computing the marginal, for example, for the first two rows:

\begin{eqnarray*}
\left. e_{\{ a_1, b_1 \}} \right| _{a_1}((0)) & = &
\sum_{\left. s' \right| _{a_1} = ((0))} e_{\{ a_1, b_1 \}}(s') \\
& = & e_{\{ a_1, b_1 \}}((0, 0)) + e_{\{ a_1, b_1 \}}((0, 1)) \\
& = & 1/2 + 0 = 1/2; \\
= \left. e_{\{ a_1, b_2 \}} \right| _{a_1}((0)) & = &
\sum_{\left. s' \right| _{a_1} = ((0))} e_{\{ a_1, b_2 \}}(s') \\
& = & e_{\{ a_1, b_2 \}}((0, 0)) + e_{\{ a_1, b_2 \}}((0, 1)) \\
& = & 3/8 + 1/8 = 1/2; \\
\left. e_{\{ a_1, b_1 \}} \right| _{a_1}((1)) & = &
\sum_{\left. s' \right| _{a_1} = ((1))} e_{\{ a_1, b_1 \}}(s') \\
& = & e_{\{ a_1, b_1 \}}((1, 0)) + e_{\{ a_1, b_1 \}}((1, 1)) \\
& = & 0 + 1/2 = 1/2; \\
= \left. e_{\{ a_1, b_2 \}} \right| _{a_1}((1)) & = &
\sum_{\left. s' \right| _{a_1} = ((1))} e_{\{ a_1, b_2 \}}(s') \\
& = & e_{\{ a_1, b_2 \}}((1, 0)) + e_{\{ a_1, b_2 \}}((1, 1)) \\
& = & 1/8 + 3/8 = 1/2.
\end{eqnarray*}
Thus, we verify that $\left. e_{\{ a_1, b_1 \}} \right| _{a_1} = \left. e_{\{ a_1, b_2 \}} \right| _{a_1}$. Here we denoted the restriction map to the one-point set $\{m\}$ by $\left. s \right| _{m}$ rather than $\left. s \right| _{\{m\}}$. We shall keep this abbreviation throughout this paper since it does not violate our intuition too much. The same computation for the other contexts $C \in \mathcal{M}$ shows that $e$ is a no-signaling empirical model. We shall also use the term \textit{empirical model} for the no-signaling model as we will only focus on quantum-like models with the no-signaling condition.

\subsection{Sheaf condition and global distributions}

We previously defined two functors $\mathcal{E}$ and $\mathcal{D}_{R}\mathcal{E}$. In fact, these two functors can be identified as presheaves. $\mathcal{E}$ can be further identified to be a sheaf, justifying its name, 'event sheaf.' Here, we will look through the definitions of presheaves and sheaves referring to Ref \cite{Har77} to make those facts clear. Remark that we can consider the measurement cover $\mathcal{M}$ as a generating set of the topology of $X$, giving each context $C$ an open subset of $X$.

\begin{definition}
Let $X$ be a topological space. A \textit{presheaf} $\mathscr{F}$ on $X$ consists of the data:
\begin{enumerate}[(i)]
    \item for every open subset $U \subset X$, an object $\mathscr{F}(U)$;
    \item for every inclusion $U \subset U'$ of open subsets of $X$, a morphism of objects $\rho_{U}^{U'}: \mathscr{F}(U') \rightarrow \mathscr{F}(U)$;
\end{enumerate}
subject to the conditions:
\begin{enumerate}[(a)]
    \item $\rho_{U}^{U}$ is the identity map $\mathscr{F}(U) \rightarrow \mathscr{F}(U)$;
    \item if $U \subset U' \subset U''$ are three open subsets, then $\rho_{U}^{U''} = \rho_{U}^{U'} \circ \rho_{U'}^{U''}$.
\end{enumerate}
\end{definition}

\begin{definition}
A presheaf $\mathscr{F}$ on a topological space $X$ is a \textit{sheaf} if it satisfies the following supplementary conditions:
\begin{enumerate}[(a)]
    \setcounter{enumi}{2}
    \item (\textit{Locality}) if $U$ is an open set, if $\left\{ V_{i} \right\}$ is an open covering of $U$, and if $s, t \in \mathscr{F}(U)$ are elements such that $\left. s \right| _{V_{i}} = \left. t \right| _{V_{i}}$ for all $i$, then $s = t$;
    \item (\textit{Gluing}) if $U$ is an open set, if $\left\{ V_{i} \right\}$ is an open covering of $U$, and if we have elements $s_{i} \in \mathscr{F}(V_{i})$ for each $i$, with the property that for each $i,j$, $\left. s_{i} \right| _{V_{i} \cap V_{j}} = \left. s_{j} \right| _{V_{i} \cap V_{j}}$, then there is an element $s \in \mathscr{F}(U)$ such that $\left. s \right| _{V_{i}} = s_{i}$ for each $i$.
\end{enumerate}
\end{definition}

In general, we assume discrete topology for the set of measurements $X$. It is easy to show that $\mathcal{E}$ is a sheaf and $\mathcal{D}_{R}\mathcal{E}$ is a presheaf. However, $\mathcal{D}_{R}\mathcal{E}$ generally fails to achieve the sheaf conditions.

\begin{table*}[htbp]
\caption{\label{tab:Locality}Two different distributions over the same measurement context, which is compatible in each local measurement.}
\begin{tabular}{c|c c c c}
& $(0,0)$ & $(0,1)$ & $(1,0)$ & $(1,1)$ \\
\colrule
$d_1$ & 1/2 & 0   & 0   & 1/2 \\
$d_2$ & 1/4 & 1/4 & 1/4 & 1/4 \\
\end{tabular}
\end{table*}
For the counter-example of locality, consider two different probability distributions defined on a measurement context $\{ a, b \}$ as Table~\ref{tab:Locality}. We can verify that $d_1$ and $d_2$ do agree on every local measurement event.
\begin{eqnarray*}
&& \left. d_1 \right|_a ((0)) = \left. d_1 \right|_a ((1)) = \left. d_1 \right|_b ((0)) = \left. d_1 \right|_b ((1)) \\
& = & \left. d_2 \right|_a ((0)) = \left. d_2 \right|_a ((1)) = \left. d_2 \right|_b ((0)) = \left. d_2 \right|_b ((1)) = 1/2
\end{eqnarray*}

\begin{table*}[tbp]
\caption{\label{tab:Empirical_PR}Empirical model on PR-box. This model is not realizable in quantum information systems, but it still satisfies the condition of a no-signaling empirical model.}
\begin{tabular}{c|c c c c}
& $(0,0)$ & $(0,1)$ & $(1,0)$ & $(1,1)$ \\
\colrule
$\{a_1, b_1\}$ & 1/2 &  0  &  0  & 1/2 \\
$\{a_1, b_2\}$ & 1/2 &  0  &  0  & 1/2 \\
$\{a_2, b_1\}$ & 1/2 &  0  &  0  & 1/2 \\
$\{a_2, b_2\}$ &  0  & 1/2 & 1/2 &  0 
\end{tabular}
\end{table*}
For the counter-example of gluing, the best example is actually the PR-box, although it is not a quantum realizable model. Table~\ref{tab:Empirical_PR} illustrates the empirical model of PR-box, which is not actually a quantum model but still satisfies the condition of being no-signaling empirical model. We can try gluing the first three rows to obtain the probability distribution over $X$, but in this case, there only exists a unique global distribution $d \in \mathcal{D}_R\mathcal{E}(X)$, such that $d((0,0,0,0)) = d((1,1,1,1)) = 1/2$ and 0 otherwise. However, it fails to match with the distribution over the context $\{ a_2, b_2 \}$ since there is no probability of getting $s(a_2) \neq s(b_2)$ from this global distribution.

In general, empirical models are compatible families of distributions. Although the sheaf condition does not hold for the entire presheaf $\mathcal{D}_R\mathcal{E}$, it is possible to ask if the gluing property holds for such a specific family $\left\{ e_C \right\} _{C \in \mathcal{M}}$. This is equivalent to asking the existence of a \textit{global distribution} $d \in \mathcal{D}_{R}\mathcal{E}(X)$, defined on the entire set of measurements $X$. Such a global distribution defines a distribution on the set $\mathcal{E}(X) = O^X$, which marginalizes to yield the behavior of the empirical model: \textit{i.e.} $\forall C \in \mathcal{M}, \left. d \right| _{C} = e_C$.

\begin{remark}
A global distribution is often called a \textit{global section} in a sheaf-theoretic sense. However, in this paper, we will keep the notation of event/assignment and distribution to distinguish one from another.
\end{remark}

Now, let's consider the meaning of the existence of a global distribution for a given empirical model $e$. Here we will show that $e$ has a global distribution if and only if it is realized by a \textit{factorisable hidden-variable model}.

\begin{definition}
Given a measurement cover $\mathcal{M}$, let $\Lambda$ be a set of values for a hidden variable. A \textit{hidden-variable model} $h$ over $\Lambda$ is an assignment such that:
\begin{itemize}
    \item $h_{\Lambda} \in \mathcal{D}_{R}(\Lambda)$;
    \item $\forall \lambda \in \Lambda$ and $C \in \mathcal{M}$, $h_{C}^{\lambda} \in \mathcal{D}_{R}\mathcal{E}(C)$;
    \item $\forall \lambda \in \Lambda$ and $C, C' \in \mathcal{M}$, $\left. h_{C}^{\lambda} \right| _{C \cap C'} = \left. h_{C'}^{\lambda} \right| _{C \cap C'}$.
\end{itemize}
\end{definition}

\begin{definition}
A hidden-variable model $h$ \textit{realizes} an empirical model $e$ if, $\forall C \in \mathcal{M}$ and $s \in \mathcal{E}(C)$:
\begin{equation*}
e_{C}(s) = \sum_{\lambda \in \Lambda} h_{C}^{\lambda}(s) \cdot h_{\Lambda}(\lambda).
\end{equation*}
\end{definition}

\begin{definition}
A hidden-variable model $h$ is \textit{factorisable} if, $\forall C \in \mathcal{M}$ and $s \in \mathcal{E}(C)$:
\begin{equation*}
h_{C}^{\lambda}(s) = \prod_{m \in C} \left. h_{C}^{\lambda} \right| _{m} (\left. s \right| _{m}).
\end{equation*}
\end{definition}

\begin{theorem}
\label{thm:Hidden_variable}
Let $e$ be an empirical model defined on a measurement cover $\mathcal{M}$ for a distribution functor $\mathcal{D}_{R}$. The following are equivalent.
\begin{enumerate}[(a)]
    \item $e$ has a global distribution.
    \item $e$ has a realization by a factorisable hidden-variable model.
\end{enumerate}
\end{theorem}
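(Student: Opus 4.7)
The plan is to prove the two implications separately, using the canonical deterministic hidden-variable model for one direction and a product construction over the hidden variable for the other.

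For $(a) \Rightarrow (b)$, I would take the canonical choice $\Lambda := \mathcal{E}(X) = O^X$, so that hidden variables are simply global assignments $\lambda: X \to O$. I would set $h_\Lambda := d$, where $d$ is the given global distribution, and for each context $C \in \mathcal{M}$ and each $\lambda \in \Lambda$ take $h_C^\lambda$ to be the point mass $\delta_{\lambda|_C}$, i.e.\ $h_C^\lambda(s) = 1$ if $s = \lambda|_C$ and $0$ otherwise. This is manifestly factorisable because each marginal $h_C^\lambda|_m$ is the point mass $\delta_{\lambda(m)}$, and a product of point masses at $\lambda(m)$ over $m \in C$ is the point mass at $\lambda|_C$. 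Realisation then follows from $\sum_\lambda h_C^\lambda(s)\,h_\Lambda(\lambda) = \sum_{\lambda|_C = s} d(\lambda) = d|_C(s) = e_C(s)$, the last equality being the defining property of the global distribution.

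For $(b) \Rightarrow (a)$, suppose $h$ is a factorisable hidden-variable model realising $e$. For each measurement $m \in X$, the compatibility condition on $h$ lets me define an unambiguous marginal $h_m^\lambda := h_C^\lambda|_m$ using any $C \in \mathcal{M}$ containing $m$ (the measurement cover condition ensures such $C$ exists). I then define, for each global assignment $t \in \mathcal{E}(X)$,
\begin{equation*}
d(t) := \sum_{\lambda \in \Lambda} h_\Lambda(\lambda) \prod_{m \in X} h_m^\lambda(t|_m),
\end{equation*}
and the remaining task is to check that $d$ is a valid $R$-distribution on $\mathcal{E}(X)$ summing to $1$, and that $d|_C = e_C$ for every $C \in \mathcal{M}$. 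Normalisation follows by interchanging the sums and noting that $\sum_{t} \prod_{m \in X} h_m^\lambda(t|_m)$ factors as $\prod_m \sum_{o \in O} h_m^\lambda(o) = 1$. The marginalisation identity is the heart of the argument: for fixed $C$ and fixed $s \in \mathcal{E}(C)$,
\begin{equation*}
d|_C(s) = \sum_{\lambda} h_\Lambda(\lambda) \sum_{t|_C = s} \prod_{m \in X} h_m^\lambda(t|_m),
\end{equation*}
and splitting the product over $m \in C$ and $m \in X \setminus C$, the sum over extensions $t|_C = s$ factors out the free coordinates, each contributing $1$, leaving $\prod_{m \in C} h_m^\lambda(s|_m)$. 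By factorisability of $h$ this product equals $h_C^\lambda(s)$, and then by realisation the outer $\lambda$-sum equals $e_C(s)$.

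The main obstacle, and the step I would treat most carefully, is the marginalisation calculation in $(b) \Rightarrow (a)$: it is the only place where factorisability is genuinely used, and one needs the compatibility of $h$ to make the single-measurement marginals $h_m^\lambda$ well-defined before the product makes sense. Everything else reduces to routine bookkeeping with finite sums in a semiring, and no positivity is needed provided $R$ is a commutative semiring, so the argument goes through for possibility semirings as well.
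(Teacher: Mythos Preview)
Your proposal is correct and follows essentially the same argument as the paper: the canonical deterministic hidden-variable model for $(a)\Rightarrow(b)$, and for $(b)\Rightarrow(a)$ the construction of well-defined single-measurement marginals $h_m^\lambda$ via compatibility, followed by the product formula and the factorisation-over-$C$-versus-$X\setminus C$ marginalisation. The only cosmetic difference is that the paper introduces an intermediate global distribution $h_X^\lambda$ and first proves $h_X^\lambda|_C = h_C^\lambda$ before averaging over $\lambda$, whereas you average immediately; the content is identical.
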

\begin{proof}
$(a) \rightarrow (b)$: For each $s \in O^{X}$, we define a distribution $\delta_{s} \in \mathcal{D}_{R}\mathcal{E}(X)$ such that $\delta_{s}(s') = 1$ for $s = s'$ and $0$ otherwise. Then if $e$ has a global distribution $d$:
\begin{equation*}
e_{C}(s) = \left. d \right| _{C} (s) = \sum_{s' \in \mathcal{E}(X), \left. s' \right| _{C} = s} d(s') = \sum_{s' \in \mathcal{E}(X)} \delta_{\left. s' \right| _{C}}(s) \cdot d(s').
\end{equation*}
Here, we identify $\mathcal{E}(X)$ to $\Lambda$, $d$ to $h_{\Lambda}$, and $\delta_{\left. s' \right| _{C}}$ to $h_{C}^{\lambda}$, so we showed $e$ is realized by a hidden variable model. Moreover, it is factorisable from that:
\begin{equation*}
\delta_{\left. s' \right| _{C}}(s) = \prod_{m \in C} \delta_{\left. s' \right| _{m}} (\left. s \right| _{m}).
\end{equation*}
$(b) \rightarrow (a)$: Suppose that $e$ is realized by a factorisable hidden-variable model $h$. For each $m \in X$, we define $h_{m}^{\lambda} := \left. h_{C}^{\lambda} \right| _{m} \in \mathcal{D}_{R}\mathcal{E}(m)$ for any $C \in \mathcal{M}$ such that $m \in C$. By the compatibility of the family $\left\{ h_{C}^{\lambda} \right\}$, this definition is independent of the choice C, being well defined. We define a distribution $h_{X}^{\lambda} \in \mathcal{D}_{R}\mathcal{E}(X)$ for each $\lambda \in \Lambda$ by:
\begin{equation*}
h_{X}^{\lambda}(s) = \prod_{m \in X} h_{m}^{\lambda}(\left. s \right| _{m}).
\end{equation*}
Now to show that $h_{X}^{\lambda}$ is a distribution, let the set of measurements $X$ be enumerated as $X = \{ m_1, \cdots, m_p \}$. Specify the global assignment $s \in \mathcal{E}(X)$ by a tuple $(o_1, \cdots, o_p)$, where $o_i = s(m_i)$. Then we can calculate:
\begin{eqnarray*}
&& \displaystyle \sum_{s \in \mathcal{E}(X)} \prod_{m \in X} h_{m}^{\lambda}(\left. s \right| _m) \\
= && \displaystyle \sum_{s \in \mathcal{E}(X)} \prod_{i = 1}^{p} h_{m_i}^{\lambda}(\left. s \right| _{m_i}) \\
= && \displaystyle \sum_{o_1} h_{m_1}^{\lambda}(m_1 \mapsto o_1) \cdot \left( \sum_{o_2} h_{m_2}^{\lambda}(m_2 \mapsto o_2) \cdot \left( \cdots \left( \sum_{o_p} h_{m_p}^{\lambda}(m_p \mapsto o_p) \right) \right) \right) \\
= && \displaystyle \sum_{o_1} h_{m_1}^{\lambda}(m_1 \mapsto o_1) \cdot \left( \sum_{o_2} h_{m_2}^{\lambda}(m_2 \mapsto o_2) \cdot \left( \cdots \left( 1 \right) \right) \right) \\
= && \cdots \\
= && \displaystyle \sum_{o_1} h_{m_1}^{\lambda}(m_1 \mapsto o_1) \cdot 1 = 1. 
\end{eqnarray*}
We can also show that for each context $C \in \mathcal{M}$, $\left. h_{X}^{\lambda} \right|_{C} = h_{C}^{\lambda}$. We choose an enumeration of $X$ such that $C = \{ m_1, \cdots, m_q\}$, $q \leq p$. Then:
\begin{eqnarray*}
\left. h_{X}^{\lambda} \right|_{C}(s)
= && \displaystyle \sum_{s' \in \mathcal{E}(X), \left. s' \right|_{C} = s} h_{X}^{\lambda}(s') \\
= && \displaystyle \sum_{s' \in \mathcal{E}(X), \left. s' \right|_{C} = s} \prod_{i=1}^{p} h_{m_i}^{\lambda}(m_i \mapsto o_i) \\
= && \displaystyle \prod_{i=1}^{q} h_{m_i}^{\lambda}(m_i \mapsto o_i) \cdot \left( 
\sum_{o_{q+1}, \cdots, o_p} \prod_{j=q+1}^{p} h_{m_j}^{\lambda}(m_j \mapsto o_j) \right) \\
= && h_{C}^{\lambda} \cdot 1 = h_{C}^{\lambda}. 
\end{eqnarray*}
Now we define a distribution $d \in \mathcal{D}_{R}\mathcal{E}(X)$ by averaging over the hidden variables:
\begin{equation*}
d(s):= \sum_{\lambda \in \Lambda} h_{X}^{\lambda}(s) \cdot h_{\Lambda}(\lambda).
\end{equation*}
The condition for distribution is automatically satisfied from that $\sum_{\lambda \in \Lambda} h_{\Lambda}(\lambda) = 1$. Moreover, $d$ restricts at each context $C$ to yield $e_C$ as:
\begin{eqnarray*}
\left. d \right|_{C}(s)
= && \displaystyle \sum_{s' \in \mathcal{E}(X), \left. s' \right|_{C} = s} d(s') \\
= && \displaystyle \sum_{s' \in \mathcal{E}(X), \left. s' \right|_{C} = s} \sum_{\lambda \in \Lambda} h_{X}^{\lambda}(s') \cdot h_{\Lambda}(\lambda) \\
= && \displaystyle \sum_{\lambda \in \Lambda} h_{\Lambda}(\lambda) \cdot \left. h_{X}^{\lambda} \right|_{C}(s) \\
= && \displaystyle \sum_{\lambda \in \Lambda} h_{\Lambda}(\lambda) \cdot h_{C}^{\lambda}(s) \\
= && e_{C}(s). 
\end{eqnarray*}
Thus $d$ is a global distribution for $e$.
\end{proof}

This result provides a definitive justification for equating the phenomena of non-locality and contextuality with obstructions to the existence of global distributions. We shall say the empirical model $e$ is \textit{noncontextual} if there exists a global distribution for $e$.

\subsection{Existence of global distributions and the contextual fraction}

\begin{definition}
Given a measurement cover $\mathcal{M}$, let $n:= \left| \mathcal{E}(X) \right|$ be the number of global assignments, and $m:= \sum_{C \in \mathcal{M}} \left| \mathcal{E}(C) \right| = \left| \left\{ \left< C, s \right> \mid C \in \mathcal{M}, s \in \mathcal{E}(C) \right\} \right|$ be the number of local assignments ranging over contexts. The \textit{incidence matrix} $\mathbf{M}$ is an $m \times n$ Boolean matrix that records the restriction relation between global and local assignments:
\begin{equation*}
\mathbf{M}\left[ \left< C, s \right> , g \right] :=
\begin{cases}
1 & \mathrm{if} \; \left. g \right|_C = s; \\
0 & \mathrm{otherwise.}
\end{cases}
\end{equation*}
\end{definition}

The incidence matrix conceptually represents the tuple of restriction maps
\begin{equation*}
\mathcal{E}(X) \rightarrow \prod_{C \in \mathcal{M}} \mathcal{E}(C)
:: s \mapsto \left( \left. s \right| _C \right) _{C \in \mathcal{M}} .
\end{equation*}
At the same time, viewing it as a matrix over the semiring $R$, it acts by matrix multiplication on distributions in $\mathcal{D}_R\mathcal{E}(X)$, represented as row vectors:
\begin{equation*}
d \mapsto \left( \left. d \right| _C \right) _{C \in \mathcal{M}} .
\end{equation*}
Thus, the image of this map will be the set of families $\left\{ e_C \right\} _{C \in \mathcal{M}}$ which arise from global distributions.

\begin{proposition}
Let $e$ be an empirical model defined on a measurement cover $\mathcal{M}$. Let $\mathbf{M}$ be an incidence matrix of $\mathcal{M}$. Let a vector $\mathbf{V}$ defined by $\mathbf{V}[i] = e_{C}(s_i)$, which encodes the value of the empirical model on each local assignment. Then, solutions to the following equation correspond bijectively to global distributions for $e$.
\begin{equation}
\label{eq:Linear_program}
\mathbf{M}\mathbf{X} = \mathbf{V} \quad \mathrm{where} \: \mathbf{X} \in R^n, \; \left\Vert \mathbf{X} \right\Vert_1 = 1.
\end{equation}
\end{proposition}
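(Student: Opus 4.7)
The plan is straightforward: exhibit an explicit bijection between vectors $\mathbf{X} \in R^n$ satisfying \eqref{eq:Linear_program} and global distributions $d \in \mathcal{D}_R\mathcal{E}(X)$ whose context-wise marginals recover $e$, and verify that each constraint in \eqref{eq:Linear_program} corresponds to one of the defining conditions of such a global distribution. First I would fix an enumeration $\mathcal{E}(X) = \{g_1, \dots, g_n\}$ and, for any $\mathbf{X} \in R^n$, define $d_{\mathbf{X}}: \mathcal{E}(X) \to R$ by $d_{\mathbf{X}}(g_j) := \mathbf{X}[j]$. Since $\mathcal{E}(X) = O^X$ is finite, $d_{\mathbf{X}}$ automatically has finite support; the only remaining requirement for $d_{\mathbf{X}}$ to be an $R$-distribution in $\mathcal{D}_R\mathcal{E}(X)$ is the normalization $\sum_j d_{\mathbf{X}}(g_j) = 1$, which is exactly the condition $\left\Vert \mathbf{X} \right\Vert_1 = 1$ (read, as is natural at the semiring level, as $\sum_j \mathbf{X}[j] = 1$).

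Next I would unpack the row of $\mathbf{M}\mathbf{X} = \mathbf{V}$ indexed by a local assignment $\left\langle C, s \right\rangle$. By the definition of the incidence matrix, this row reads
\begin{equation*}
\sum_{g \in \mathcal{E}(X)} \mathbf{M}\!\left[ \left\langle C, s \right\rangle, g \right] \, d_{\mathbf{X}}(g) \;=\; \sum_{g \in \mathcal{E}(X),\; \left. g \right|_C = s} d_{\mathbf{X}}(g),
\end{equation*}
which, by the marginalization formula recalled in the remark following the definition of $\mathcal{D}_R\mathcal{E}$, is precisely $\left. d_{\mathbf{X}} \right|_C(s)$. Equating this to $\mathbf{V}\!\left[ \left\langle C, s \right\rangle \right] = e_C(s)$ for every pair $\left\langle C, s \right\rangle$ with $C \in \mathcal{M}$ and $s \in \mathcal{E}(C)$ gives $\left. d_{\mathbf{X}} \right|_C = e_C$ for every $C \in \mathcal{M}$, which is exactly the condition for $d_{\mathbf{X}}$ to be a global distribution realizing $e$.

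Finally I would note that $\mathbf{X} \mapsto d_{\mathbf{X}}$ is tautologically a bijection between $R^n$ and the set of functions $\mathcal{E}(X) \to R$; restricting this bijection to the solution set of \eqref{eq:Linear_program} therefore yields a bijection onto the set of global distributions for $e$, as claimed. The main (and essentially only) obstacle here is bookkeeping: one must verify that the indexing of the rows of $\mathbf{M}$ by local assignments $\left\langle C, s \right\rangle$ is coherent with the indexing of $\mathbf{V}$, and confirm that the norm condition $\left\Vert \mathbf{X} \right\Vert_1 = 1$ is taken in the semiring-valued sense as a plain sum of entries (not a real-valued absolute-value norm, which would be meaningless for general $R$). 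Beyond these notational checks the statement is a direct unrolling of the definitions, so no genuine mathematical difficulty arises.
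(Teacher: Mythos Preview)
The paper states this proposition without proof, so there is no argument to compare against; your proposal is correct and is precisely the standard unrolling of the definitions that the paper's surrounding discussion (the remark that $\mathbf{M}$ ``acts by matrix multiplication on distributions in $\mathcal{D}_R\mathcal{E}(X)$'' sending $d \mapsto (\left. d \right|_C)_{C \in \mathcal{M}}$) already sketches informally. Your caveat about reading $\left\Vert \mathbf{X} \right\Vert_1$ as the semiring sum of entries rather than an absolute-value norm is well taken and is the only point where the statement as written is slightly ambiguous.
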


With this proposition, we can now compute the existence of global distributions from the given empirical model. However, the existence of global distributions depends on the semiring $R$, on which empirical models are discussed. We have naturally thought of probability distributions so far, which are defined on the non-negative reals $\mathbb{R}_{\geq 0}$, but here, let me address how the situation differs according to the semiring.

There are three main examples of semirings discussed in the field of contextuality: the Booleans
\begin{equation*}
(\mathbb{B}, \vee, \wedge),
\end{equation*}
the non-negative reals
\begin{equation*}
(\mathbb{R}_{\geq 0},+,\times),
\end{equation*}
and the reals
\begin{equation*}
(\mathbb{R},+,\times).
\end{equation*}
We call the distribution $d \in \mathcal{D}_\mathbb{B}$ on Boolean semiring a \textit{possibility distribution}. In the case of the semiring $\mathbb{R}_{\geq 0}$, $\mathcal{D}_{\mathbb{R}_{\geq 0}}$ is the set of \textit{probability distributions}. In the case of the reals $\mathbb{R}$, $\mathcal{D}_\mathbb{R}$ is the set of \textit{signed probability measures} with finite support, allowing for 'negative probabilities.' It can be shown that we can always find global distributions for the given empirical model on the semiring $\mathbb{R}$, making $\mathcal{D}_\mathbb{R}$ a sheaf. It is also straightforward to show that the probabilistic global distribution can exist only when the possibilistic global distribution exists.

Now we will extend this discussion about the incidence matrix further to define the measure of contextuality. Here we consider the semiring $\mathbb{R}_{\geq 0}$. First, define a convex sum $\lambda e + (1-\lambda) e'$ of two empirical models $e$ and $e'$ on the same measurement cover, by taking the convex sum of probability distributions on each context. Compatibility is preserved in the convex sum; hence, it yields a well-defined empirical model.
\begin{definition}
Given an empirical model $e$, consider a convex decomposition
\begin{equation}
\label{eq:Convex_decomp}
e = \lambda e^{NC} + (1-\lambda) e', \quad \lambda \in [0, 1],
\end{equation}
where $e^{NC}$ denotes a noncontextual model and $e'$ is another empirical model. The \textit{noncontextual fraction} $\mathsf{NCF}(e)$ is the maximum possible value of such $\lambda$. The \textit{contextual fraction} is given by $\mathsf{CF}(e) := 1 - \mathsf{NCF}(e)$.
\end{definition}

Then, we will consider the computation of this contextual fraction using the incidence matrix. Recall the equation \eqref{eq:Linear_program}, which is valid only for a noncontextual model. Since now we can no longer guarantee the solution of $\mathbf{M}\mathbf{X} = \mathbf{V}$, we will consider a relaxed condition:
\begin{equation*}
\mathbf{M}\mathbf{X} \leq \mathbf{V},
\end{equation*}
where the comparison is elementwise throughout this section. Let an empirical model $e$ and some convex decomposition $e = \lambda e^{NC} + (1-\lambda) e'$ be given. Because both empirical models are defined on the same measurement cover, the incidence matrix is the same for both models. Let $\mathbf{V}^{NC}$ encode the noncontextual model $e^{NC}$ and suppose a vector $\mathbf{X}^{NC} \in \mathbb{R}_{\geq 0}^n$ satisfies $\mathbf{M}\mathbf{X}^{NC} = \mathbf{V}^{NC}$, $\left\Vert \mathbf{X} \right\Vert_1 = 1$. Then, since $\mathbf{V} = \lambda \mathbf{V}^{NC} + (1-\lambda) \mathbf{V}'$,
\begin{equation*}
\lambda \mathbf{V}^{NC} = \mathbf{M} \lambda \mathbf{X}^{NC} \leq \mathbf{M} \left( \lambda \mathbf{X}^{NC} + (1-\lambda) \mathbf{X}' \right) = \mathbf{M}\mathbf{X} \leq \mathbf{V},
\end{equation*}
which implies existence of a solution $\mathbf{M}\mathbf{X} \leq \mathbf{V}$ with a lower bound $\lambda \mathbf{V}^{NC}$. Here, the norm $\left\Vert \mathbf{X} \right\Vert_1 \geq \left\Vert \lambda \mathbf{X}^{NC} + (1-\lambda) \mathbf{X}' \right\Vert_1 \geq \lambda$. Thus, an optimal solution $\mathbf{X}^*$ of the following linear programming(LP) gives the noncontextual fraction by $\mathsf{NCF}(e) = \max{\lambda} = \left\Vert \mathbf{X}^* \right\Vert_1$:
\begin{eqnarray*}
\begin{array}{ll}
\mathtt{Find} & \mathbf{X} \in \mathbb{R}^n \\
\mathtt{maximizing} & \left\Vert \mathbf{X} \right\Vert_1 \\
\mathtt{subject \; to} & \mathbf{M}\mathbf{X} \leq \mathbf{V}.
\end{array}
\end{eqnarray*}
Thus, we obtain a method computing a contextual fraction of a given empirical model $e$. One can refer to Ref \cite{ABM17} to check that this contextual fraction has monotonicity for the free operations of a resource theory, thus, works as a useful measure.

\subsection{Possibilistic empirical models and strong contextuality}

Once we have this measure of contextuality, one can notice the empirical models whose contextuality equals 1. In fact, an empirical model has a contextual fraction of 1 if and only if it is strongly contextual. Here we define strong contextuality based on a possibilistic empirical model and show the equivalence of maximally contextual and strongly contextual.

\begin{definition}[Possibilistic empirical model]
A \textit{possibilistic empirical model} $\mathcal{S}$ is a sub-presheaf of $\mathcal{E}$ such that:
\begin{itemize}
    \item every compatible family for the measurement cover $\mathcal{M}$ induces a global assignment;
    \item $\mathcal{S}$ is \textit{flasque beneath the cover}: If $U \subset U' \subset C \in \mathcal{M}$ then every $s \in \mathcal{S}(U)$ is the restriction of some $s' \in \mathcal{S}(U')$.
\end{itemize}
When an empirical model $e$ is given, a possibilistic empirical model $\mathcal{S}_e$ of $e$ can be derived as follows:
\begin{equation*}
\mathcal{S}_{e}(U) := \left\{ s \in \mathcal{E}(U) \mid \forall C \in \mathcal{M}, \left. s \right|_{U \cap C} \in \mathrm{supp}\left( \left. e_C \right|_{U \cap C} \right) \right\}.
\end{equation*}
\end{definition}
\begin{remark}
Here we define the possibilistic empirical model $\mathcal{S}_e$ of $e$ as a family of sets, rather than a family of possibilistic distributions $\mathcal{S}_{e, C}: C \rightarrow \mathbb{B}$. In fact, there is no problem with interpreting the possibilistic empirical model as a normal empirical model of a possibility distribution, but, we rather choose that the possibilistic empirical model returns a set of possible events, \textit{i.e.} $s \in \mathcal{S}_e(U)$ if $\left. s \right|_{U \cap C} \in \mathrm{supp}(\left. e_C \right|_{U \cap C})$ to clarify the notation.
\end{remark}

\begin{definition}
Let $\mathcal{S}$ be a possibilistic empirical model for a measurement scenario $\left< X, \mathcal{M}, O \right>$. We say that $\mathcal{S}$ is
\begin{itemize}
    \item \textit{logically contextual at $s \in \mathcal{S}(C)$} if there is no global assignment $g \in \mathcal{S}(X)$ such that $\left. g \right|_C = s$;
    \item \textit{logically contextual} if $\mathcal{S}$ is logically contextual at some local assignment. Otherwise, it is \textit{non-contextual};
    \item \textit{strongly contextual}, written $\mathrm{SC}(\mathcal{S})$, if $\mathcal{S}$ has no global assignment. \textit{i.e.} $\mathcal{S}(X) = \emptyset$.
\end{itemize}
\end{definition}
\begin{remark}
Note that possibilistic contextuality and strong contextuality differ from each other. An empirical model is possibilistically contextual if there is no global possibility distribution $d \in \mathcal{D}_R\mathcal{E}(X)$ and is strongly contextual if there is no global assignment $g \in \mathcal{E}(X)$.
\end{remark}

\begin{proposition}
An empirical model $e$ is strongly contextual if and only if it is maximally contextual.
\end{proposition}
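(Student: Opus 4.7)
The plan is to prove the two directions separately, working with the explicit linear-programming characterization of the noncontextual fraction together with the support-based definition of $\mathcal{S}_e$.

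For the direction that strong contextuality implies maximal contextuality, I will argue by contrapositive. Suppose $\mathsf{NCF}(e) > 0$, so there exists a convex decomposition $e = \lambda e^{NC} + (1-\lambda) e'$ with $\lambda > 0$ and $e^{NC}$ noncontextual. By Theorem~\ref{thm:Hidden_variable} (or directly by definition), $e^{NC}$ has a global distribution $d^{NC} \in \mathcal{D}_{\mathbb{R}_{\geq 0}}\mathcal{E}(X)$, whose support is nonempty. Pick any $g \in \mathrm{supp}(d^{NC})$. For each context $C$, marginalization gives $e^{NC}_C(\left. g \right|_C) \geq d^{NC}(g) > 0$, so $\left. g \right|_C \in \mathrm{supp}(e^{NC}_C)$. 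Since $\lambda > 0$ and both summands are non-negative, $e_C(\left. g \right|_C) \geq \lambda e^{NC}_C(\left. g \right|_C) > 0$, which shows $\left. g \right|_C \in \mathrm{supp}(e_C)$ for every $C \in \mathcal{M}$. By the definition of $\mathcal{S}_e$, this means $g \in \mathcal{S}_e(X)$, contradicting strong contextuality.

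For the reverse direction, I will explicitly construct a convex decomposition witnessing $\mathsf{NCF}(e) > 0$ whenever $\mathcal{S}_e(X) \neq \emptyset$. Fix any $g \in \mathcal{S}_e(X)$, so $e_C(\left. g \right|_C) > 0$ for all $C$, and let $\lambda := \min_{C \in \mathcal{M}} e_C(\left. g \right|_C) > 0$. Define $e^{NC}$ by the deterministic global distribution $\delta_g$ on $\mathcal{E}(X)$; it is manifestly noncontextual (and has $e^{NC}_C = \delta_{\left. g \right|_C}$). If $\lambda = 1$ then $e$ equals $e^{NC}$ and is already noncontextual, so assume $\lambda < 1$ and set
\begin{equation*}
e'_C(s) := \frac{e_C(s) - \lambda\, e^{NC}_C(s)}{1 - \lambda}.
\end{equation*}
I then verify $e'$ is a well-defined no-signaling empirical model: non-negativity at $s = \left. g \right|_C$ uses the choice of $\lambda$, non-negativity elsewhere is immediate, and each $e'_C$ sums to $1$ by linearity; no-signaling transfers from $e$ and $e^{NC}$ because restriction is linear. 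This exhibits a valid decomposition with $\lambda > 0$, giving $\mathsf{NCF}(e) \geq \lambda > 0$ and hence $\mathsf{CF}(e) < 1$.

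The only subtlety I anticipate is the bookkeeping in the second direction, namely checking that the subtracted empirical model $e'$ still lies in the no-signaling polytope, which reduces to verifying that the restriction maps $\left. (\cdot) \right|_{C \cap C'}$ commute with convex combinations; this is routine given that $e$ and $\delta_g$ individually satisfy the compatibility condition. The rest of the argument is a clean matching between support-level constraints (defining $\mathcal{S}_e$) and magnitude-level slack (defining the LP for $\mathsf{NCF}$).
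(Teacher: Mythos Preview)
Your proposal is correct and follows essentially the same approach as the paper's proof: both directions use the contrapositive, extracting a global assignment in $\mathcal{S}_e(X)$ from a nontrivial noncontextual component (via Theorem~\ref{thm:Hidden_variable}) in one direction, and building the residual model $e'_C(s) = (e_C(s) - \lambda\,\delta_{g|_C}(s))/(1-\lambda)$ with $\lambda \leq \min_C e_C(g|_C)$ in the other. Your treatment is slightly more explicit in handling the $\lambda = 1$ edge case and in picking a single $g$ from the support of the global distribution rather than first decomposing $e^{NC}$ into deterministic models, but the substance is identical.
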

\begin{proof}
Suppose that $e$ admits a convex decomposition \eqref{eq:Convex_decomp} as follows:
\begin{equation*}
e = \lambda e^{NC} + (1-\lambda) e', \quad \lambda \in [0, 1].
\end{equation*}
By the Theorem~\ref{thm:Hidden_variable}, we can take the non-contextual empirical model $e^{NC}$ to be a convex sum of deterministic models $\sum_i \mu_i \delta_{s_i}$, where each $s_i \in \mathcal{E}(X)$ is a global assignment. If $\lambda > 0$, then from \eqref{eq:Convex_decomp}, $s_i \in \mathcal{S}_e(X)$ for each $i$. Thus strong contextuality implies maximal contextuality.

For the converse, suppose that $s \in \mathcal{S}_e(X)$. Taking $e^{NC} = \lambda \cdot \delta_s$, we shall define $q$ such that \eqref{eq:Convex_decomp} holds. For each $C \in \mathcal{M}$ and $s' \in \mathcal{E}(C)$:
\begin{equation*}
q_C(s') := \frac{e_C(s') - \lambda \cdot \delta_{\left. s\right|_c}(s')}{1 - \lambda}.
\end{equation*}
It is easily verified that, for each $C$, $\sum_{s' \in \mathcal{E}(C)}$. To ensure that $q$ is always non-negative, we must have $\lambda \leq \inf_{C \in \mathcal{M}} e_C(\left. s \right|_C)$. Since this is the infimum of a finite set of positive numbers, we can find $\lambda > 0$ satisfying this condition.

It remains to verify that $q$ is no-signalling, \textit{i.e.} that $\left\{ q_C \right\}$ forms a compatible family. Given $C, C' \in \mathcal{M}$, fix $s_0 \in \mathcal{E} \left( C \cap C' \right)$. Now
\begin{equation*}
\left. q_C \right|_{C \cap C'} (s_0) = \frac{1}{1-\lambda} \left[ \left( \sum_{s' \in \mathcal{E}(C), \left. s' \right|_{C \cap C'} = s_0} e_C(s') \right) - \lambda \cdot \left. \delta_s \right|_{C \cap C'} (s_0) \right].
\end{equation*}
A similar analysis applies to $\left. q_{C'} \right|_{C \cap C'} (s_0)$. Using the compatibility of $e$, we conclude that $\left. q_C \right|_{C \cap C'} = \left. q_{C'} \right|_{C \cap C'}$
\end{proof}

\section{All-versus-nothing arguments and partial groups}

When we look into a quantum system, we see that the measurement depends on both the quantum state and the observable. However, it has been observed that some sets of observables inhere the contextuality independently from the quantum state. This type of contextuality, earlier observed by Kochen and Specker \cite{KS75}, has been developed to define different types of contextuality \cite{Mer93, ORBR17, KWB18, KL19}. Here, we formulate them with a sheaf-theoretic structure, starting from what is formally studied as an all-versus-nothing (AvN) argument \cite{ABK+15, ABCP17, Aas22}. We extend this argument to state-independent AvN and claim that Kochen-Specker type contextuality is, in fact, state-independent AvN in a partial closure.

\subsection{Mermin's square}

\begin{figure}
\includegraphics[width=0.3\textwidth]{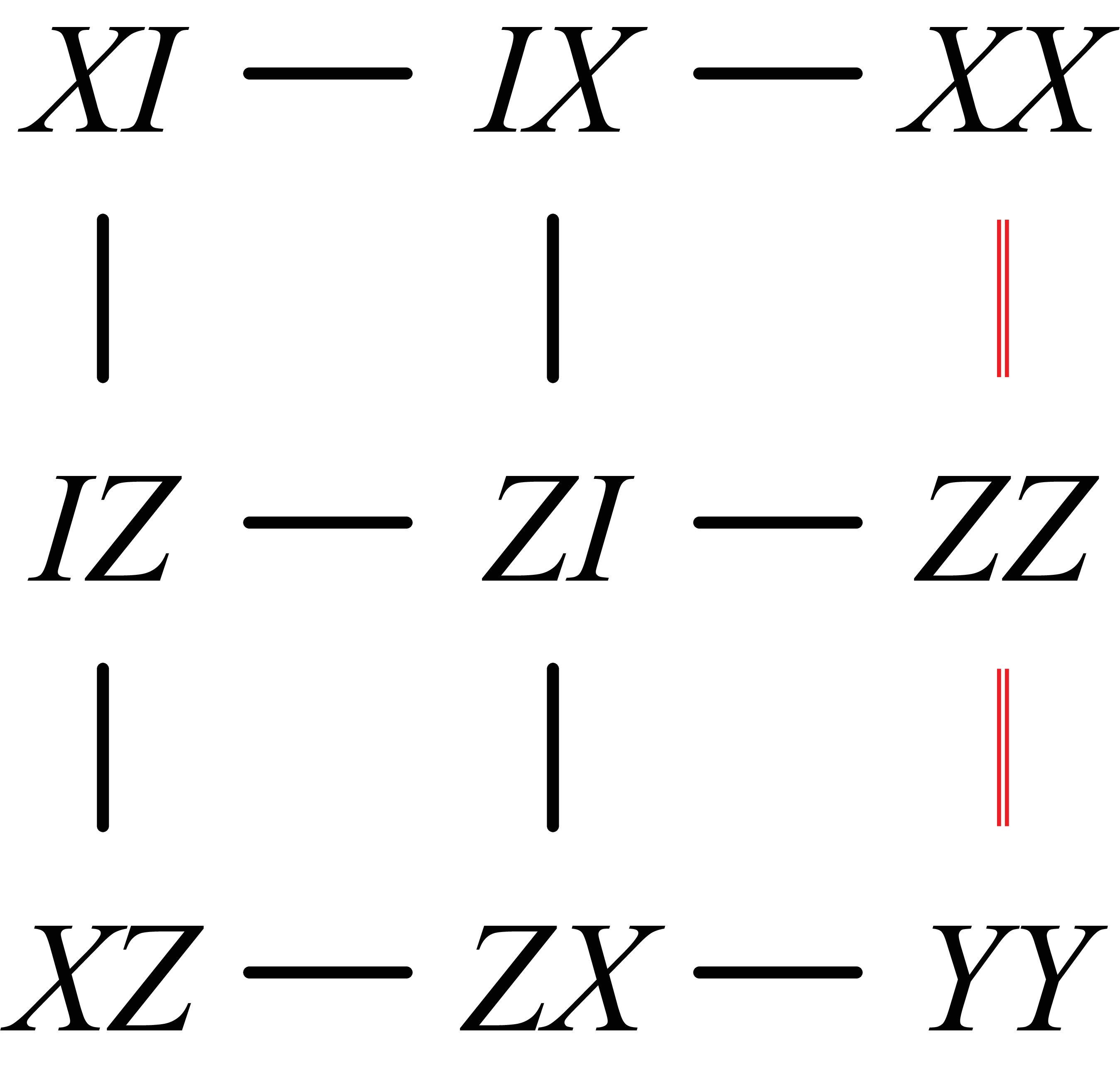}
\caption{\label{fig:Mermins_square}Mermin's square. The observables within each row and each column mutually commute, and the product of all three observables equals $+I$ except for the last column being $-I$.}
\end{figure}
Before we start, let's first look into the simple example of state-independent AvN. Fig.~\ref{fig:Mermins_square} shows the Mermin's square \cite{Mer93} that consists of mutually commuting Pauli observables on two qubits. In each row and each column, the product of the first two operators equals the last one, except for the last column: $XX \cdot ZZ = -YY$. Once we rearrange the product, we get that the product of every row and column equals $\pm I$: $+I$ for all three rows and the first two columns from the left, and $-I$ for the last column.

Here, consider a measurement scenario $\left< X, \mathcal{M}, \mathbb{Z}_2 \right>$ where each row and each column translates to a measurement context $C \in \mathcal{M}$. Then, try assigning a global assignment $g: X \rightarrow \mathbb{Z}_2$ to this measurement scenario. The product equations are mapped to the following linear equations,
\begin{equation}
\label{eq:Mermins_square}
\begin{split}
g(XI) \oplus g(IX) \oplus g(XX) & = 0, \\
g(IZ) \oplus g(ZI) \oplus g(ZZ) & = 0, \\
g(XZ) \oplus g(ZX) \oplus g(YY) & = 0, \\
g(XI) \oplus g(IZ) \oplus g(XZ) & = 0, \\
g(IX) \oplus g(ZI) \oplus g(ZX) & = 0, \\
g(XX) \oplus g(ZZ) \oplus g(YY) & = 1,
\end{split}
\end{equation}
where the outcome $g(x) \in \mathbb{Z}_2$ corresponds to the eigenvalue of $x$ by $\lambda(x) = (-1)^{g(x)}$. This system of linear equations is not satisfiable by any global assignment $g$. To see this, add the equations altogether then each element on the left-hand side is added twice, which ends up being 0, while the right-hand side equals 1. Therefore, no quantum state can realize a non-contextual empirical model with these observables, \textit{i.e.}, this set of observables inheres inconsistency.

\begin{table*}
\caption{\label{tab:Empirical_Mermins_square}Empirical model on Mermin's square realized by Bell state $\left| \phi^{+} \right>$.}
\begin{ruledtabular}
\begin{tabular}{c|cccccccc}
$C$ & \textbf{000} & \textbf{001} & \textbf{010} & \textbf{011} & \textbf{100} & \textbf{101} & \textbf{110} & \textbf{111} \\
\hline
$\{ XI, IX, XX \}$ & 1/2 &     &     &     &     &     & 1/2 &     \\
$\{ IZ, ZI, ZZ \}$ & 1/2 &     &     &     &     &     & 1/2 &     \\
$\{ XZ, ZX, YY \}$ &     &     &     & 1/2 &     & 1/2 &     &     \\
$\{ XI, IZ, XZ \}$ & 1/4 &     &     & 1/4 &     & 1/4 & 1/4 &     \\
$\{ IX, ZI, ZX \}$ & 1/4 &     &     & 1/4 &     & 1/4 & 1/4 &     \\
$\{ XX, ZZ, YY \}$ &     &  1  &     &     &     &     &     &     
\end{tabular}
\end{ruledtabular}
\end{table*}
For example, consider this measurement scenario realized by $\left| \phi^{+} \right> = \left( \left| 00 \right> + \left| 11 \right> \right) / 2$. Table~\ref{tab:Empirical_Mermins_square} illustrates the empirical model on Mermin's square realized by $\left| \phi^{+} \right>$. It is obvious that this empirical model is strongly contextual. In fact, we can prove that every empirical model with this measurement scenario is strongly contextual.

\subsection{Consistency of an \textit{R}-linear theory}

The key concept first to notice is a ring structure given to the set of outcomes $O$, which justifies writing a measurement scenario in the form $\left< X, \mathcal{M}, R \right>$, where $R$ is the ring of outcomes.

\begin{definition}
Let $\left< X, \mathcal{M}, R \right>$ be a measurement scenario. An \textit{$R$-linear equation} is a triple $\phi = \left< C, r, a \right>$ where $C \in \mathcal{M}$ is a context, $r: C \rightarrow R$ assigns a coefficient in $R$ to each $x \in C$, and $a \in R$ is a constant. A local assignment $s \in \mathcal{E}(C)$ \textit{satisfies} $\phi$, written $s \vDash \phi$, if
\begin{equation*}
\sum_{x \in C} r(x) s(x) = a.
\end{equation*}
An \textit{$R$-linear theory} $\mathbb{T}_R$ is a set of $R$-linear equations. A global assignment $g \in \mathcal{E}(X)$ satisfies $\mathbb{T}_R$, written $g \vDash \mathbb{T}_R$, if $\forall \phi = \left< C, r, a \right> \in \mathbb{T}_R, \left. g \right|_C \vDash \phi$. $\mathbb{T}_R$ is \textit{consistent} if there exists a global assignment $g$ that satisfies $\mathbb{T}_R$.
\end{definition}
\begin{remark}
Note that the consistency of an $R$-linear theory is defined on an event sheaf $\mathcal{E}$, not on an empirical model.
\end{remark}

For example, consider Mermin's square. The linear equation $s(XI) \oplus s(IX) \oplus s(XX) = 0$ for a local assignment $s \in \mathcal{E}(C)$ on the context $C = \{ XI, IX, XX \} \in \mathcal{M}$ translates to a triple $\left< \{ XI, IX, XX \}, 1, 0 \right>$ where 1 denotes a constant function that maps every measurement $x$ to 1. The explicit statement of the whole linear theory $\mathbb{T}_{\mathbb{Z}_2}$ of Mermin's square is given as follows:
\begin{eqnarray*}
\mathbb{T}_{\mathbb{Z}_2} = \left\{ \left< \{ XI, IX, XX \}, 1, 0 \right>, \left< \{ IZ, ZI, ZZ \}, 1, 0 \right>, \left< \{ XZ, ZX, YY \}, 1, 0 \right>, \right. \\
\left. \left< \{ XI, IZ, XZ \}, 1, 0 \right>, \left< \{ IX, ZI, ZX \}, 1, 0 \right>, \left< \{ XX, ZZ, YY \}, 1, 1 \right> \right\} ,
\end{eqnarray*}
Once we suppose a global assignment $g \in \mathcal{E}(X)$ such that $\forall \phi \in \mathbb{T}_{\mathbb{Z}_2}, \left. g \right|_C \vDash \phi$. This deduces the same equation with \eqref{eq:Mermins_square}, thus, a contradiction. Therefore, the linear theory of Mermin's square is inconsistent.

\subsection{AvN arguments}

An AvN argument \cite{ABK+15, ABCP17, Aas22} is characterized by an $R$-linear theory of a possibilistic empirical model. Here, we consider an $R$-linear theory $\mathbb{T}_R$ derived from a set of assignments $S \subset \mathcal{E}(U)$, as follows:
\begin{equation*}
\mathbb{T}_R(S) := \left\{ \phi \mid \forall s \in S, s \vDash \phi \right\} .
\end{equation*}
Once we have a possibilistic empirical model $\mathcal{S}$, $\mathcal{S}(C)$ gives a set of possible assignments. Thus, we can connect a possibilistic empirical model $\mathcal{S}$ to an $R$-linear theory.

\begin{definition}
Given a possibilistic empirical model $\mathcal{S}$, an \textit{$R$-linear theory $\mathbb{T}_R(\mathcal{S})$} is defined by:
\begin{equation*}
\mathbb{T}_R(\mathcal{S}) := \bigcup_{C \in \mathcal{M}} \mathbb{T}_R \left( \mathcal{S}(C) \right) = \left\{ \phi = \left< C, r, a \right> \mid \forall s \in \mathcal{S}(C), s \vDash \phi \right\}.
\end{equation*}
We say that $\mathcal{S}$ is $\mathrm{AvN}$ if its $R$-linear theory $\mathbb{T}_R(\mathcal{S})$ is inconsistent. i.e. there is no global assignment $g \in \mathcal{E}(X)$ such that $g \vDash \mathbb{T}_R(\mathcal{S})$.
\end{definition}

\begin{table*}
\caption{\label{tab:Probabilistic_Mermins_square}Probabilistic empirical model on Mermin's square realized by Bell state $\left| \phi^{+} \right>$.}
\begin{ruledtabular}
\begin{tabular}{c|cccccccc}
$C$ & \textbf{000} & \textbf{001} & \textbf{010} & \textbf{011} & \textbf{100} & \textbf{101} & \textbf{110} & \textbf{111} \\
\hline
$\{ XI, IX, XX \}$ & 1 &   &   &   &   &   & 1 &   \\
$\{ IZ, ZI, ZZ \}$ & 1 &   &   &   &   &   & 1 &   \\
$\{ XZ, ZX, YY \}$ &   &   &   & 1 &   & 1 &   &   \\
$\{ XI, IZ, XZ \}$ & 1 &   &   & 1 &   & 1 & 1 &   \\
$\{ IX, ZI, ZX \}$ & 1 &   &   & 1 &   & 1 & 1 &   \\
$\{ XX, ZZ, YY \}$ &   & 1 &   &   &   &   &   &   
\end{tabular}
\end{ruledtabular}
\end{table*}
For example, go back to the empirical model on Mermin's square realized by Bell state $\left| \phi^+ \right>$. The possibilistic empirical model obtained from the probabilistic empirical model is characterized in Table~\ref{tab:Probabilistic_Mermins_square}. The theory $\mathbb{T}_R(\mathcal{S})$ contains the following linear equations.
\begin{eqnarray*}
& s(XI) \oplus s(IX) = 0, ~ s(XX) = 0 & \mathrm{on} ~ \{ XI, IX, XX \} \\
& s(IZ) \oplus s(ZI) = 0, ~ s(ZZ) = 0 & \mathrm{on} ~ \{ IZ, ZI, ZZ \} \\
& s(XZ) \oplus s(ZX) = 1, ~ s(YY) = 1 & \mathrm{on} ~ \{ XZ, ZX, YY \} \\
& s(XI) \oplus s(IZ) \oplus s(XZ) = 0 & \mathrm{on} ~ \{ XI, IZ, XZ \} \\
& s(IX) \oplus s(ZI) \oplus s(ZX) = 0 & \mathrm{on} ~ \{ IX, ZI, ZX \} \\
& s(XX) = 0, ~ s(ZZ) = 0, ~ s(YY) = 1 & \mathrm{on} ~ \{ XX, ZZ, YY \}
\end{eqnarray*}
Each local assignment $s$ is defined on the specific context in each line. Note that this theory contains more linear equations compared to the previous example. Again, once we suppose a global assignment $g$, we find an inconsistency, so $\mathcal{S}$ is $\mathrm{AvN}$.

\begin{proposition}
If $\mathcal{S}$ is $\mathrm{AvN}$ then $\mathcal{S}$ is strongly contextual.
\end{proposition}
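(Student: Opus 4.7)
The plan is to prove the contrapositive: assuming $\mathcal{S}$ is not strongly contextual, I will exhibit a global assignment that satisfies every equation in $\mathbb{T}_R(\mathcal{S})$, thereby witnessing consistency of the theory and ruling out AvN. So I start by supposing $\mathcal{S}(X) \neq \emptyset$ and picking some $g \in \mathcal{S}(X)$.

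The key observation is that because $\mathcal{S}$ is a \emph{sub-presheaf} of $\mathcal{E}$, for every context $C \in \mathcal{M}$ the restriction $\left. g \right|_{C}$ again lies in $\mathcal{S}(C)$. Now pick an arbitrary equation $\phi = \left< C, r, a \right> \in \mathbb{T}_R(\mathcal{S})$. By the very definition of $\mathbb{T}_R(\mathcal{S})$, every $s \in \mathcal{S}(C)$ satisfies $\phi$; specialising to $s = \left. g \right|_{C}$ gives $\left. g \right|_{C} \vDash \phi$, which is exactly what it means for the global assignment $g$ to satisfy $\phi$. Since $\phi$ was arbitrary, $g \vDash \mathbb{T}_R(\mathcal{S})$, so $\mathbb{T}_R(\mathcal{S})$ is consistent and $\mathcal{S}$ fails to be AvN.

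I do not expect a real obstacle here: the statement is essentially a one-line unpacking of the definitions. The only point that deserves a moment of care is the use of the sub-presheaf structure on $\mathcal{S}$, which guarantees that restricting a putative global section of $\mathcal{S}$ lands back in $\mathcal{S}$ locally rather than merely in $\mathcal{E}$; without this, one could only conclude $\left. g \right|_{C} \in \mathcal{E}(C)$ and the appeal to the defining property of $\mathbb{T}_R(\mathcal{S})$ would not go through. The flasque-beneath-the-cover condition and the compatible-family condition in the definition of a possibilistic empirical model play no role in this direction; they would only be relevant for a (partial) converse, which is known to fail in general (e.g.\ for Hardy-type models that are strongly contextual but admit no inconsistent $R$-linear theory).
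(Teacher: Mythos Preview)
Your proof is correct and follows essentially the same contrapositive argument as the paper: pick $g \in \mathcal{S}(X)$, use the sub-presheaf property to get $\left. g \right|_C \in \mathcal{S}(C)$, and conclude $g \vDash \mathbb{T}_R(\mathcal{S})$. One minor aside unrelated to the proof itself: the Hardy model is logically contextual but \emph{not} strongly contextual, so it is not a counterexample to the converse you mention; genuine counterexamples to ``strongly contextual $\Rightarrow$ AvN'' exist but require different models.
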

\begin{proof}
Suppose $\mathcal{S}$ is not strongly contextual, i.e. that there is some $g \in \mathcal{S}(X)$. Then for each $\phi = \left< C, r, a \right> \in \mathbb{T}_R(\mathcal{S})$, $\left. g \right|_{C} \in \mathcal{S}(C)$, hence $\left. g \right|_{C} \vDash \phi$ by the definition of $\mathbb{T}_R(\mathcal{S})$. Thus, $\mathbb{T}_R(\mathcal{S})$ is consistent.
\end{proof}

\subsection{State-independent AvN}

While we defined the AvN argument on a possibilistic empirical model, Mermin's square seems to lie in a stronger class of contextuality, as we have seen that the linear theory of Mermin's square with a possibilistic empirical model is larger than that without an empirical model. When we have a measurement $x$ in a quantum system, it is actually associated with an observable $O$ and a state $\psi$, or more generally, a density matrix $\rho$, so to characterize a measurement $x = O_\psi$. Hereby, we focus on an algebra given by the set of observables, and each measurement $x$ is specified by an observable $O$. In the remaining part of section 3, we denote the set of observables as $X = \{ x_i \}$, which is realized to be a set of measurements $X_\psi = \{ x_{i, \psi} \}$ by a state $\psi$, and a measurement cover of observables $\mathcal{M}$, realized to be a measurement cover $\mathcal{M}_\psi$ by a state $\psi$. We may abuse the word ``measurement cover'' for the measurement cover of observables, but clearly, the measurement cover of observables must be realized by a state $\psi$ to be an actual measurement cover.

Now turning to the set of observables $X$, we first restrict our concern to the subset of Pauli $n$-group $G_n$. This is to take care before dealing with the general quantum observables, \textit{i.e.}, arbitrary self-adjoint operators on a complex Hilbert space, where Tsirelson's problem \cite{Tsi87} may arise. To be specific, when we consider a span of a set of nonlocal observables, it may not be able to approximate the generated operator algebra with a set of local Pauli operators, even if the dimension of the Hilbert space is finite \cite{Vid21, JNV+22}. Thus, we restrict our concern to Pauli groups $G_n$ where each basis operator is a tensor product of local Pauli operators.

Now in $G_n$, we have a multiplicative group operation, which is generally not commutative. However, an event sheaf $\mathcal{E}$ still can be assigned on this set of Pauli observables. For any subset $U \subset X \subset G_n$, we consider an assignment $s: U \rightarrow \mathbb{Z}_2$ that maps to the measurement of each observable $x \in U$ as $\lambda(x) = \left( -1 \right)^{s(x)}$ for its eigenvalue $\lambda(x) = \pm 1$.

Once we have a multiplicative relation in $X$, we observe that it translates to a linear equation according to the following rules:
\begin{itemize}
    \item when $\prod_i x_i = I$ holds for mutually commuting $x_i \in X$, the relation corresponds to a linear equation $\sum_{i} s(x_i) = 0$;
    \item when $\prod_i x_i = -I$ holds for mutually commuting $x_i \in X$, the relation corresponds to a linear equation $\sum_{i} s(x_i) = 1$.
\end{itemize}

For example, when we have $x \cdot y = z$, it is equivalent to $x \cdot y \cdot z^{-1} = I$, so we can map it to a linear equation. In the $G_n$, $x^{-1} = x$, so it further reduces to an equation $x \cdot y \cdot z = I$ in $G_n$. The second relation stands for the dual element $-x$ of $x$ such that $x \cdot (-x) = -I$. This is dual in the sense of a local assignment $s$ where $s(x) \oplus s(-x) = 1$ always holds.

Note that the condition of mutual commutation is required to map multiplication in $X$ to addition in $\mathbb{Z}_2$, which is commutative. However, there is an intriguing object we already have that encodes the commutability, namely, the measurement cover $\mathcal{M}$.

From the definition of a measurement cover, $\mathcal{M}$ is a family of maximal commuting subsets of $X$, \textit{i.e.}, for any $x, y \in C \subset X$, $x \cdot y = y \cdot x$, and if there exists $C \subset C' \subset X$ for $C \in \mathcal{M}$ such that the elements of $C'$ commutes with each other, then $C = C'$. Since commutative multiplication only occurs in a commuting subset, it only occurs in a measurement context. This means that we can define a linear theory of a set of observables $X$, as presented in the following definition.

\begin{definition}[State-independent AvN]
Given a set of observables $X \subset G_n$, \textit{a linear theory $\mathbb{T}_{\mathbb{Z}_2}(X)$} is defined by:
\begin{equation*}
\mathbb{T}_{\mathbb{Z}_2}(X) := \left\{ \phi = \left< C, r, a \right> \mid C \in \mathcal{M}, \prod_{x \in C} x^{r(x)} = (-1)^{a} \cdot I \right\}.
\end{equation*}
We say that $X$ is \textit{state-independently} AvN if its linear theory $\mathbb{T}_{\mathbb{Z}_2}(X)$ is inconsistent. i.e. there is no $g \in \mathcal{E}(X)$ such that $g \vDash \mathbb{T}_{\mathbb{Z}_2}(X)$.
\end{definition}

\begin{proposition}
If $X$ is state-independently $\mathrm{AvN}$ then any possibilistic model $\mathcal{S}$ on $\left< X_\psi, \mathcal{M}_\psi, \mathbb{Z}_2 \right>$ is $\mathrm{AvN}$.
\end{proposition}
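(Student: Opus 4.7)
The plan is to establish the inclusion $\mathbb{T}_{\mathbb{Z}_2}(X) \subseteq \mathbb{T}_{\mathbb{Z}_2}(\mathcal{S})$ between the state-independent linear theory of the observables and the linear theory derived from the possibilistic model. Once this inclusion is in hand, the proposition is immediate: if no global assignment $g \in \mathcal{E}(X)$ satisfies the smaller theory $\mathbb{T}_{\mathbb{Z}_2}(X)$, then a fortiori no $g$ can satisfy the larger theory $\mathbb{T}_{\mathbb{Z}_2}(\mathcal{S})$, so $\mathcal{S}$ is AvN.

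To prove the inclusion, I would fix an equation $\phi = \langle C, r, a \rangle \in \mathbb{T}_{\mathbb{Z}_2}(X)$. By definition this encodes the operator identity $\prod_{x \in C} x^{r(x)} = (-1)^a \cdot I$ in the Pauli $n$-group $G_n$, where the elements of $C$ mutually commute since $C \in \mathcal{M}$. The goal is to show that every $s \in \mathcal{S}(C)$ satisfies $\phi$. Fix such an $s$. Since $\mathcal{S}$ is realized by a state $\psi$, $s$ lies in the support of the local distribution $e_C$ induced by Born's rule, so the outcome tuple $\lambda(x) = (-1)^{s(x)}$ for $x \in C$ occurs with non-zero probability in a joint projective measurement of the commuting family $C$. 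Consequently there is a non-zero vector $|v\rangle$ in the joint eigenspace of $C$ with these eigenvalues. Evaluating the operator identity on $|v\rangle$ gives
\begin{equation*}
(-1)^{\sum_{x \in C} r(x) s(x)} |v\rangle \;=\; \prod_{x \in C} x^{r(x)} |v\rangle \;=\; (-1)^a |v\rangle,
\end{equation*}
which reduces in $\mathbb{Z}_2$ to $\sum_{x \in C} r(x) s(x) \equiv a$, i.e.\ $s \vDash \phi$. Hence $\phi \in \mathbb{T}_{\mathbb{Z}_2}(\mathcal{S}(C)) \subseteq \mathbb{T}_{\mathbb{Z}_2}(\mathcal{S})$, establishing the inclusion.

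The main point to articulate carefully — and the only place where any quantum content enters — is the Born-rule step that converts the operator identity into an eigenvalue identity on a simultaneous eigenspace. This step crucially relies on the pairwise commutativity of the elements of $C$ (guaranteed by $C$ being a measurement context), so that the operators in $C$ admit a simultaneous diagonalization and the product $\prod x^{r(x)}$ acts by a single scalar on each joint eigenspace. Notably no specific hypothesis on $\psi$ is used beyond the fact that $s$ is in the support of $e_C$, which is exactly why the conclusion holds uniformly for every realizing state and justifies the adjective \emph{state-independent}. With the inclusion $\mathbb{T}_{\mathbb{Z}_2}(X) \subseteq \mathbb{T}_{\mathbb{Z}_2}(\mathcal{S})$ proved, inconsistency of the smaller theory automatically lifts to inconsistency of the larger one, and the proposition follows.
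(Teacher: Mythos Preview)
Your proposal is correct and follows essentially the same approach as the paper: both establish the inclusion $\mathbb{T}_{\mathbb{Z}_2}(X) \subseteq \mathbb{T}_{\mathbb{Z}_2}(\mathcal{S})$ by showing that any assignment $s$ in the support of $e_C$ must satisfy the linear constraint encoded by the operator identity $\prod_{x\in C} x^{r(x)} = (-1)^a I$, using the simultaneous diagonalizability of the commuting context. The only difference is presentational: the paper argues by contradiction via a density-matrix and projector computation of $\mathrm{Tr}\bigl((\prod x^{r(x)})\rho'\bigr)$, whereas you evaluate the operator identity directly on a nonzero joint eigenvector, which is a cleaner but equivalent route to the same conclusion.
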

\begin{proof}
What we want to show is that any possibilistic empirical model $\mathcal{S}$ realized by any state satisfies $\mathbb{T}_{\mathbb{Z}_2}(X)$. Here we deal with a density matrix $\rho$ for generality. For each linear equation $\phi = \left< C, r, a \right> \in \mathbb{T}_{\mathbb{Z}_2}(X)$, what we want to have is:
\begin{equation*}
\sum_{x \in C} r(x) s(x_\rho) = a,
\end{equation*}
for $\forall s \in \mathcal{S}(C)$ realized by an arbitrary density matrix $\rho$. Here we show it by contradiction.

Suppose not. Then it means $\exists s \in \mathcal{S}(C)$ such that $\sum_{x \in C} r(x) s(x_\rho) \neq a$. This equation with the assignment $s$ maps to an equation with an assignment of eigenvalues $\lambda(x) = (-1)^{s(x_\rho)}$ as follows:
\begin{equation*}
\prod_{x \in C} \lambda(x)^{r(x)} \neq (-1)^a.
\end{equation*}
When $s \in \mathcal{S}(C)$, it means that $\mathrm{Prob}(\lambda) \neq 0$, where $\mathrm{Prob}(\lambda)$ is obtaining eigenvalues $\lambda(x)$ sequentially going through each $x \in C$. Now look into the updated state:
\begin{equation*}
\rho' = \left( \prod_{x \in C} P_{\lambda(x)} \right) \rho \left( \prod_{x \in C} P_{\lambda(x)} \right),
\end{equation*}
where $P_{\lambda(x)}$ is the projector of each eigenvalue $\lambda(x)$ of $x$. We have all $x$'s in $C$, so they mutually commute, and so do their projectors $P_{\lambda(x)}$, so the expression is valid. Then, from the definition of $\mathbb{T}_{\mathbb{Z}_2}(\mathcal{M})$, we have $\prod_{x \in C'} x = (-1)^a \cdot I$. Then we can try measuring this from the updated state $\rho$ as follows:
\begin{equation*}
\mathrm{Tr} \left( \left( \prod_{x \in C} x^{r(x)} \right) \rho' \right) = (-1)^a \cdot \mathrm{Tr} \left( \rho' \right) = (-1)^a.
\end{equation*}
Here we derive a contradiction:
\begin{eqnarray*}
\mathrm{lhs} & = & \mathrm{Tr} \left( \left( \prod_{x \in C} x^{r(x)} \right) \left( \prod_{x \in C} P_{\lambda(x)} \right) \rho \left( \prod_{x \in C} P_{\lambda(x)} \right) \right) \\
& = & \mathrm{Tr} \left( \left( \prod_{x \in C} x^{r(x)} P_{\lambda(x)} \right) \rho \left( \prod_{x \in C} P_{\lambda(x)} \right) \right) \\
& = & \mathrm{Tr} \left( \left( \prod_{x \in C}\lambda(x)^{r(x)} \cdot P_{\lambda(x)} \right) \rho \left( \prod_{x \in C} P_{\lambda(x)} \right) \right) \\
& = & \mathrm{Tr} \left( \left( \prod_{x \in C} \lambda(x)^{r(x)} \right) \cdot \left( \prod_{x \in C} P_{\lambda(x)} \right) \rho \left( \prod_{x \in C} P_{\lambda(x)} \right) \right) \\
& = & \left( \prod_{x \in C} \lambda(x)^{r(x)} \right) \mathrm{Tr} \left( \rho' \right) \\
& = & \prod_{x \in C} \lambda(x)^{r(x)} \neq (-1)^a = \mathrm{rhs}.
\end{eqnarray*}

This is because $\mathrm{Tr}(\rho') \neq 1$, which means $\prod_{x \in C} P_{\lambda(x)} = 0$, thus $\mathrm{Prob}(\lambda) = 0$, resulting in that $s \notin \mathcal{S}$. Hence we have $\forall s \in \mathcal{S}(C)$, $s \vDash \phi$, therefore $\phi \in \mathbb{T}_{\mathbb{Z}_2}(\mathcal{S})$. Since we have shown it with arbitrary $\phi \in \mathbb{T}_{\mathbb{Z}_2}(X)$, $\mathbb{T}_{\mathbb{Z}_2}(X) \subset \mathbb{T}_{\mathbb{Z}_2}(\mathcal{S})$. However, since $\mathbb{T}_{\mathbb{Z}_2}(X)$ is not satisfiable, $\mathbb{T}_{\mathbb{Z}_2}(\mathcal{S})$ is also not satisfiable.
\end{proof}

Now, we highlight that this definition catches the idea discussed by Kochen and Specker \cite{KS75}. Here, $X$ can be translated to a partial algebra in Kochen and Specker's argument where the commeasurability corresponds to the measurement cover $\mathcal{M}$. The rule of mapping the multiplicative equations of $X$ into a linear theory $\mathbb{T}_{\mathbb{Z}_2}(X)$ coincides with an embedding of $X$ into a Boolean algebra. The inconsistency of the linear theory implies the failure of embeddability, which was previously connected to the non-classical logic of quantum mechanics.

However, there still exists some ambiguity in matching these concepts to Kochen and Specker's argument. In particular, the set $X \subset G_n$ is not necessarily closed under multiplication, which disturbs the direct interpretation of $X$ as a partial algebra as in Kochen and Specker's work. Considering that some useful arguments on quantum advantage, \textit{e.g.}, Refs.~\onlinecite{KWB18, KL19}, are based on Kochen-Specker type contextuality, we should characterize the relation between the set of observables and a partial algebra, by defining the measurement cover $\mathcal{M}$ of $X$ in a more rigorous way.

\subsection{Partial group and Kochen-Specker type contextuality}

First, let's start with the concept of a partial group. While the idea of partial algebra was raised earlier by Kochen and Specker in 1975 \cite{KS75}, the definition of a partial group is discussed only after Assiry in 2018 \cite{Ass18}. Here we define an abelian partial group based on those two studies without exactly showing how this definition connects to the aforementioned mathematical objects.

\begin{definition}
An \textit{abelian partial group $\left< \Gamma, \smallfrown, \cdot \right>$} consists of a set $\Gamma$, a binary relation $\smallfrown \subset \Gamma \times \Gamma$, and a binary operation $\cdot: \smallfrown \rightarrow \Gamma$, satisfying the following properties:
\begin{itemize}
    \item the relation $\smallfrown$ is reflexive and symmetric, \textit{i.e.}, for any $a, b \in \Gamma$, $a \smallfrown a$ and $a \smallfrown b \Leftrightarrow b \smallfrown a$;
    \item the relation $\smallfrown$ is closed under the operation $\cdot$, \textit{i.e.}, if $a \smallfrown b$, $b \smallfrown c$ and $c \smallfrown a$ then $a \cdot b \smallfrown c$;
    \item there exists an identity $1 \in \Gamma$ such that $1 \smallfrown a$ for any $a \in \Gamma$ and $1 \cdot a = a \cdot 1 = a$;
    \item if a subset $S$ of $\Gamma$ satisfies that $a \smallfrown b$ for any $a, b \in S$, $S$ generates an abelian group $\left< S \right> \subset \Gamma$.
\end{itemize}
Such relation $\smallfrown$ is called \textit{commutativity} or \textit{commeasurability} of $\Gamma$. The binary operation $\cdot: \smallfrown \rightarrow \Gamma$ is in fact a \textit{partial binary operation on $\Gamma$}, written by $\cdot: \Gamma \times \Gamma \rightharpoonup \Gamma$. Note that for a commuting subset $S \subset \Gamma$, the partial binary operation $\cdot$ on $\Gamma$ becomes a commutative binary operation on the abelian group $\left< S \right> \subset \Gamma$. Meanwhile, we may refer to an abelian partial group $\Gamma$ without specifying its commutativity and partial binary operation.
\end{definition}

For example, the set of observables in Mermin's square (including $\pm$ sign and an identity $II$) forms an abelian partial group where the commutativity corresponds to the usual commutativity relation of observables, and the partial binary operation is defined on a commuting set of observables. Remark the following properties in this example:
\begin{itemize}
    \item $\Gamma = \pm \{ II, XI, IX, XX, IZ, ZI, ZZ, XZ, ZX, YY\}$;
    \item $XI \smallfrown IZ$, but $XI \not\smallfrown ZI$;
    \item $S = \{ XI, IZ \}$ is a commuting subset of $\Gamma$ and generates an abelian group $\left< S \right> = \{ II, XI, IZ, XZ \} \subset \Gamma$.
\end{itemize}

\begin{remark}
A group $G$ can be interpreted as an abelian partial group where the commutativity corresponds to the usual commutativity relation of the elements of $G$, and the partial binary operation is derived from the binary operation of $G$. Thus, we shall say a group $G$ is an abelian partial group without specifying its commutativity relation and partial binary operation explicitly.
\end{remark}

This definition of an abelian partial group characterizes the algebraic structure of quantum observables where the multiplication of observables is restricted by commutativity so that we can bring the hidden variable arguments to each commuting subset. Furthermore, it can extend to a partial algebra by introducing addition and scalar multiplication, so that it can coincide with an operator algebra, as in Kochen and Specker's original work.

Kochen and Specker found that the hidden variable model and the quantum model branch out when we try to glue the measurement outcomes of overlapping commuting subsets to obtain the global hidden variable model, \textit{i.e.}, there exists a quantum model of measurement inconsistent with the hidden variable model on the global domain, but still satisfies the classical arguments on each commuting domain. This concept justifies the statement: the quantum model of measurement depends on a ``context.''

While this partial algebraic structure plays an important role in quantum theory, we do notice that a set of observables $X$ in a measurement scenario is not generally a closed abelian partial group. For example, an $XZ-(2,2,2)$ scenario has an observable set $X = \{ XI, IX, ZI, IZ \}$ which is not closed under multiplication, \textit{e.g.}, $XI \cdot IX = XX \notin X$. This is exactly the difference between sheaf-theoretic contextuality and Kochen-Specker type contextuality: Kochen-Specker type contextuality is defined on an abelian partial group while sheaf-theoretic contextuality does not require the measurement set to be an abelian partial group.

Having said that, we can try linking those two definitions by considering the generation of an abelian partial group by the given set of measurements, or in other words, a partial closure of the given set.
\begin{definition}[Partial closure]
For a given subset $X$ of $\Gamma$, a \textit{partial closure $\overline{X}$ of $X$ on $\Gamma$} is the smallest abelian partial group in $\Gamma$ containing $X$.
\end{definition}

For example, $X_1 = \{ XI, IZ \}$ is a subset of a Pauli group $G_2$ of which a partial closure $\overline{X_1} = \{ II, XI, IZ, XZ\}$ is an abelian partial group in $G_2$. Likewise, the partial closure of $X_2 = \{ XI, IX, ZI, IZ \}$ is $\overline{X_2} = \pm \{ II, XI, IX, XX, IZ, ZI, ZZ, XZ, ZX, YY\} \subset G_2$.

Finally, we can characterize the measurement cover $\mathcal{M}$ of a set $X$ in the context of an abelian partial group of quantum observables. $\mathcal{M}$ is defined on an abelian partial group $G_n \supset X$, where each context $C \in \mathcal{M}$ is a maximal commuting subset of $X$. Similarly, we can derive the measurement cover $\overline{\mathcal{M}}$ of $\overline{X}$, where each context $C \in \overline{\mathcal{M}}$ is a maximal abelian group in $\overline{X}$. One may interpret $\mathcal{M}$ as a family of the intersections of $X$ and the abelian groups $C \in \overline{\mathcal{M}}$, which also catches the idea that the measurement cover $\mathcal{M}$ originates in the commutativity relation of $G_n \supset X$.

Now, it is straightforward to specify Kochen-Specker type contextuality in the language of an abelian partial group.
\begin{definition}[Kochen-Specker type contextuality]
Given a set $X \subset G_n$, $X$ has \textit{Kochen-Specker type contextuality} if there is no global assignment of eigenvalues $\lambda: \overline{X} \rightarrow \{ \pm 1 \}$ consistent with the partial commutative multiplication in $\overline{X}$.
\end{definition}

Kochen-Specker type contextuality is also called as \textit{contextuality in a closed sub-theory}, concerning that the argument only deals with a part of the quantum observables. It is also clear that Kochen-Specker type contextuality is nothing but the state-independent AvN in a partial closure, once we define the state-independent AvN in a partial closure as follows:
\begin{definition}
Given a set of observables $X \subset G_n$, $X$ is \textit{state-independently $\mathrm{AvN}$ in a partial closure} if $\mathbb{T}_{\mathbb{Z}_2}(\overline{X})$ is inconsistent.
\end{definition}

Following the argument so far, we state the following corollary without giving proof.
\begin{corollary}
A set of observables $X$ has Kochen-Specker type contextuality if and only if $X$ is state-independently $\mathrm{AvN}$ in a partial closure.
\end{corollary}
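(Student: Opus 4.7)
The plan is to pass between eigenvalue assignments $\lambda: \overline{X} \to \{\pm 1\}$ and $\mathbb{Z}_2$-valued global assignments $g \in \mathcal{E}(\overline{X})$ via the exponential bijection $\lambda(x) = (-1)^{g(x)}$, and then verify that consistency of $\lambda$ with the partial commutative multiplication in $\overline{X}$ matches, under this bijection, with satisfaction of every linear equation in $\mathbb{T}_{\mathbb{Z}_2}(\overline{X})$. The corollary then follows by contraposition.

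First, I would fix the bijection above and, for a maximal commuting context $C \in \overline{\mathcal{M}}$ and coefficients $r: C \to \mathbb{Z}_2$, observe that the identity $\prod_{x \in C} \lambda(x)^{r(x)} = (-1)^a$ is equivalent term-by-term to the linear equation $\sum_{x \in C} r(x)\, g(x) = a$ in $\mathbb{Z}_2$. Thus, whenever $\prod_{x \in C} x^{r(x)} = (-1)^a \cdot I$ holds in $\overline{X}$, asking $\lambda$ to respect this multiplicative relation is exactly the statement $\left. g \right|_C \vDash \langle C, r, a \rangle$ in the sense of the $R$-linear theory defined earlier.

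Second, I would argue that this accounts for every relation relevant to the consistency of $\lambda$. Any multiplicative relation in $\overline{X}$ involves only pairwise commuting elements (since the partial binary operation is defined only there), so it lies inside some maximal commuting context $C \in \overline{\mathcal{M}}$; moreover, because each element of $G_n$ squares to a scalar $\pm I$, the exponents reduce mod 2 without loss, so every relation takes the form $\prod_{x \in C} x^{r(x)} = (-1)^a \cdot I$ for some $r: C \to \mathbb{Z}_2$ and $a \in \mathbb{Z}_2$. Hence the nontrivial multiplicative relations of $\overline{X}$ are in exact correspondence with the triples indexing $\mathbb{T}_{\mathbb{Z}_2}(\overline{X})$.

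Combining the two steps, a consistent eigenvalue assignment $\lambda: \overline{X} \to \{\pm 1\}$ exists if and only if some $g \in \mathcal{E}(\overline{X})$ satisfies $\mathbb{T}_{\mathbb{Z}_2}(\overline{X})$; the contrapositive yields the stated equivalence. The step I expect to require the most care is the second one: one must confirm that widening from a generic commuting set to a maximal commuting context and reducing exponents mod 2 neither loses nor adds constraints beyond those already encoded in $\mathbb{T}_{\mathbb{Z}_2}(\overline{X})$, which rests on the closure of $\overline{X}$ under commuting multiplication together with the scalar-squaring property of the Pauli $n$-group.
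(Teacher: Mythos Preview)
Your proposal is correct and matches the paper's approach: the paper states this corollary ``without giving proof,'' treating it as an immediate consequence of unpacking the two definitions via the exponential bijection $\lambda(x)=(-1)^{g(x)}$, which is precisely what you do explicitly. Your careful check that every multiplicative relation in $\overline{X}$ lands in some maximal context and reduces mod~2 to a triple in $\mathbb{T}_{\mathbb{Z}_2}(\overline{X})$ simply fills in the details the paper leaves implicit.
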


\begin{figure*}
\includegraphics[width=0.7\textwidth]{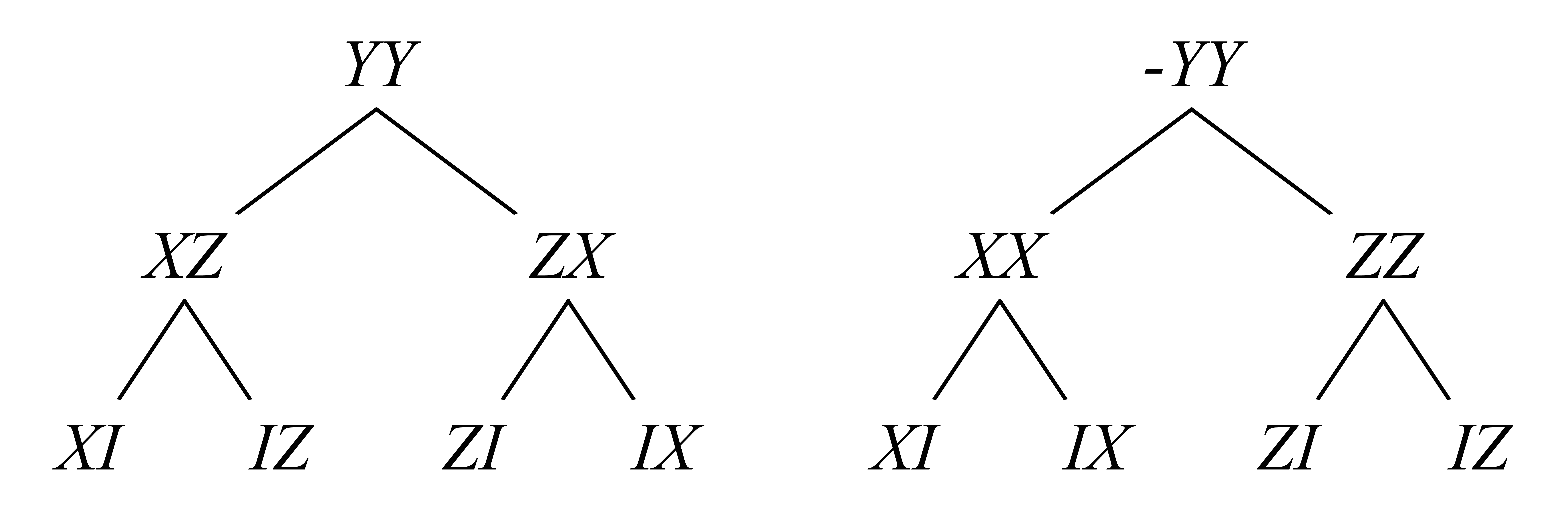}
\caption{\label{fig:Determining_tree}Determining tree proposed by Kirby and Love \cite{KL19}. Each parent operator and its children operators pairwise commute and the product of children operators equals the parent operator.}
\end{figure*}
Kirby and Love \cite{KL19} developed this concept to define their own way of witnessing contextuality, and applied it to evaluate the classical simulatability of practical quantum algorithms. Fig.~\ref{fig:Determining_tree} illustrates the example of the formalism Kirby and Love developed.

\begin{definition}
A \textit{determining tree $\tau_x$} for a Pauli observable $x$ over a set of Pauli observables $X$ is a tree whose nodes are Pauli operators and whose leaves are operators in $X$, such that:
\newpage
\begin{itemize}
    \item the root is $x$;
    \item all children of any particular parent pairwise commute as operators;
    \item every parent node is the operator product of its children.
\end{itemize}
We say that $x$ is determined by $X$ if there exists a determining tree $\tau_x$ over $X$. For a determining tree $\tau_x$, the \textit{determining set $D(\tau_x)$} is defined to be the set containing one copy of each operator with odd multiplicity as a leaf in $\tau_x$.
\end{definition}

Note that $x \in \overline{X}$ if and only if there exists $\tau_x$ over $X$. The determining tree images inductive production of $\overline{X}$. Whenever we have $\prod_i x_i = x$, $\left( \prod_i x_i \right) \cdot x = I$, which maps to a linear equation in $\mathbb{T}_{\mathbb{Z}_2}(\overline{X})$. Thus, each determining tree $\tau_x$ gives a linear theory $\phi_{\tau_x} \in \mathbb{T}_{\mathbb{Z}_2}(\overline{X})$.

Now, once we get a determining tree $\tau_x$, it pushes us to try inducing an assignment $g_{\tau_x}$ for $x \in \overline{X}$ from some global assignment $g \in \mathcal{E}(X)$, which assigns value $g_{\tau_x} = \sum_{x' \in D(\tau_x)}g(x')$ following the determining tree $\tau_x$ over $X$. However, what we can observe is that the determining tree for $x$ is not unique, and in fact, $g_{\tau_x}(x) \neq g_{\tau_x'}(x)$ in general. The following theorem characterizes this failure of assigning values to the determining tree.

\begin{theorem}[KL contextuality]
\label{thm:KL_contexutality}
A set $X$ of Pauli operators is state-independently $\mathrm{AvN}$ in a partial closure if and only if there exists a determining tree $\tau_x$ over $X$ and a determining tree $\tau_{-x}'$ over $X$ such that $D(\tau_x) = D(\tau_{-x}')$.
\end{theorem}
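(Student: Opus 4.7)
The plan is to prove both directions separately: the forward direction (matching trees imply state-independent AvN in a partial closure) is a direct $\mathbb{Z}_2$-linear combination argument, while the backward direction requires constructing a pair of determining trees from a combinatorial witness of the inconsistency of $\mathbb{T}_{\mathbb{Z}_2}(\overline{X})$. For the forward direction, I would first prove the key lemma that for any determining tree $\tau_y$ over $X$, the equation $s(y) \oplus \bigoplus_{x' \in D(\tau_y)} s(x') = 0$ is a $\mathbb{Z}_2$-linear consequence of $\mathbb{T}_{\mathbb{Z}_2}(\overline{X})$. The argument is local: at each internal node $v$ with children $v_1, \ldots, v_k$, the tree condition $v = \prod_i v_i$ together with $v^2 = I$ (valid for Pauli operators) show that $\{v, v_1, \ldots, v_k\}$ is a commuting subset lying in some $C \in \overline{\mathcal{M}}$, and the relation $v \cdot \prod_i v_i = I$ yields the equation $s(v) \oplus \bigoplus_i s(v_i) = 0$ in $\mathbb{T}_{\mathbb{Z}_2}(\overline{X})$. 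Summing these over all internal nodes modulo $2$, every non-root, non-leaf node appears exactly twice (once as parent, once as child) and cancels, leaving only $s(y)$ and the leaf contributions, which collapse to $\bigoplus_{x' \in D(\tau_y)} s(x')$. Applying this lemma to $\tau_x$ and $\tau_{-x}'$, adding, and using $D(\tau_x) = D(\tau_{-x}')$, gives $s(x) \oplus s(-x) = 0$, contradicting the equation $s(x) \oplus s(-x) = 1$ forced by the commuting relation $x \cdot (-x) = -I$ in $\overline{X}$.

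For the backward direction, extract a finite witness of inconsistency: equations $\phi_i = \langle C_i, r_i, a_i \rangle$ for $i \in J$ with $\sum_{i \in J} r_i(y) = 0$ in $\mathbb{Z}_2$ for every $y \in \overline{X}$ and $\sum_{i \in J} a_i = 1$. Split $J = J_1 \sqcup J_2$ with $\sum_{J_1} a_i = 0$ and $\sum_{J_2} a_i = 1$ (always possible with $J_1 \neq \emptyset$, since any such witness must satisfy $|J| \geq 2$ and contain at least one $a_i = 1$; a singleton witness would require $\phi = \langle C, 0, 1\rangle$, which corresponds to $I = -I$ and is not in the theory). Build a determining tree $\tau_I$ for $x = I$ whose root has $|J_1|$ children labelled $(-1)^{a_i} I$ for $i \in J_1$ (central Pauli scalars, trivially pairwise commuting, with product $(-1)^{\sum_{J_1} a_i} I = I$); each such node then branches into $\{y \in C_i : r_i(y) = 1\}$ (commuting because $C_i$ is a context, with product $(-1)^{a_i} I$); finally, each $y \in \overline{X} \setminus X$ encountered is expanded via a fixed determining tree $\tau_y$ over $X$, reused consistently across all its occurrences. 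Construct $\tau_{-I}'$ analogously from $J_2$ with root $-I$.

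The matching $D(\tau_I) = D(\tau_{-I}')$ then follows from a parity count: the multiplicity of any $y \in \overline{X}$ at the intermediate layer of $\tau_I$ equals $m_y^{(1)} = \sum_{i \in J_1} r_i(y)$, and similarly $m_y^{(2)} = \sum_{i \in J_2} r_i(y)$ for $\tau_{-I}'$, so the witness condition forces $m_y^{(1)} \equiv m_y^{(2)} \pmod{2}$. Because the same subtree $\tau_y$ is reused for each occurrence, the parity of each $x \in X$ among the $X$-leaves of $\tau_I$ matches that in $\tau_{-I}'$, yielding $D(\tau_I) = D(\tau_{-I}')$ and proving the theorem with $x = I$. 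The main obstacle is this bookkeeping step: one must verify that the recursive expansion to $X$-leaves preserves the parity identity despite the way multiplicities compose through nested subtrees, and confirm that the two constructed objects satisfy every clause of the determining-tree definition (in particular, that each intermediate label is a legitimate Pauli operator and that parent-equals-product-of-children holds at every node, including the virtual $\pm I$ layer).
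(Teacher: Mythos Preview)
Your labelling of the two directions is reversed relative to the paper: what you call the ``forward direction'' (trees $\Rightarrow$ state-independent AvN in a partial closure) is what the paper actually proves, and your ``backward direction'' is what the paper calls the forward direction. On the direction the paper does prove, your telescoping lemma---summing the node-level equations $s(v) \oplus \bigoplus_i s(v_i) = 0$ over all internal nodes so that every non-root internal node cancels---is exactly the mechanism behind the paper's argument, only spelled out more carefully; the paper simply asserts $g_{\mathcal T}(x) = \sum_{x' \in D(\tau_x)} g(x')$ and derives the contradiction with $g_{\mathcal T}(x) + g_{\mathcal T}(-x) = 1$ without writing the induction explicitly.

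The key point of comparison is that the paper does \emph{not} prove the other implication at all: it states outright that ``the forward direction of the theorem is not really obvious \ldots\ we leave this proof for our future work.'' Your witness-extraction construction is therefore not a variant of the paper's proof but a proposal for filling a gap the paper leaves open. The construction looks sound. The $\mathbb{Z}_2$-witness $\sum_i r_i = 0$, $\sum_i a_i = 1$ exists by linear algebra; the split $J = J_1 \sqcup J_2$ with the required parities is always achievable for $|J| \geq 2$, and a singleton witness is excluded exactly as you argue. The two-layer tree with $\pm I$ as intermediate nodes satisfies the determining-tree axioms because $\pm I$ are central Pauli elements and each $\{y : r_i(y)=1\} \subseteq C_i$ is commuting with product $(-1)^{a_i} I$. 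The parity identity $m_y^{(1)} \equiv m_y^{(2)} \pmod 2$ then propagates to the $X$-leaves precisely because you reuse the same $\tau_y$ at every occurrence: the leaf multiplicity of any $x \in X$ in $\tau_{(-1)^k I}$ is $\sum_y m_y^{(k)} \mu_y(x)$ with $\mu_y(x)$ independent of $k$. The bookkeeping you flag is genuine but routine; the only extra care is to fix each $\tau_y$ once up front (including for $y = \pm I$ should they appear at the second layer and lie outside $X$), which is harmless since every $y \in \overline{X}$ admits a determining tree over $X$.
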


\begin{figure*}
\includegraphics[width=0.6\textwidth]{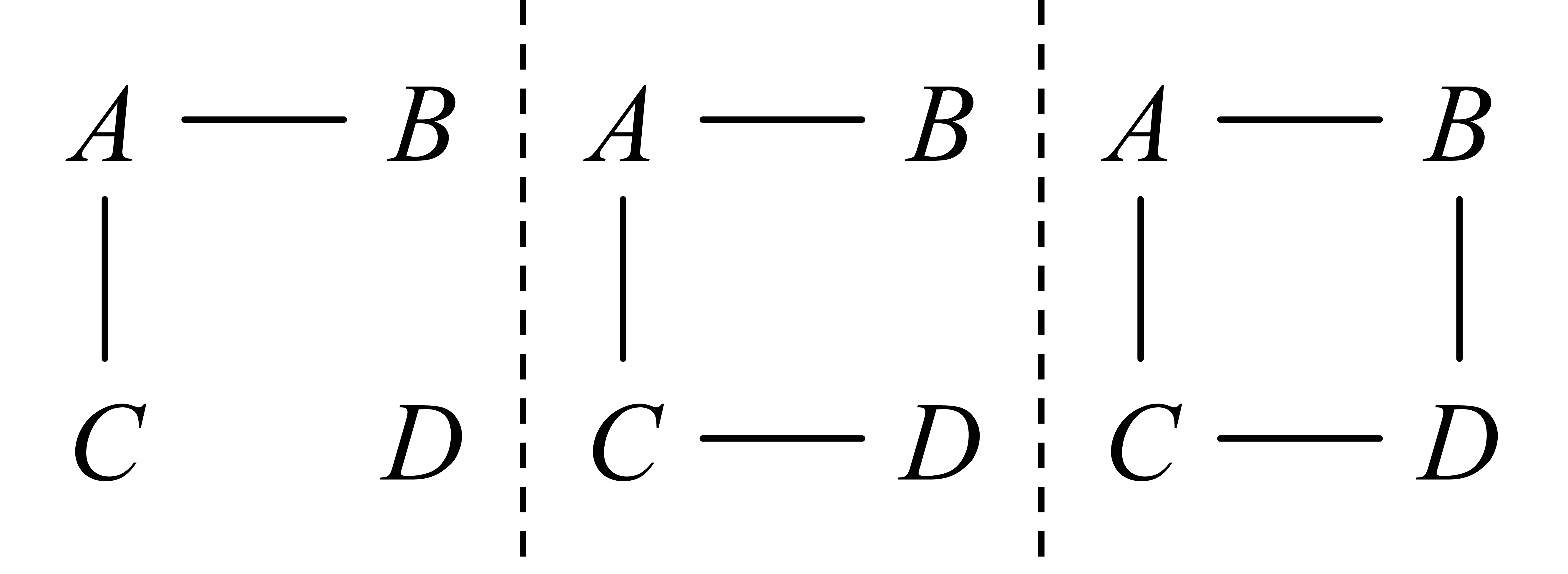}
\caption{\label{fig:Commutability_graph}Commutability graphs that induce state-independent AvN in a partial closure.}
\end{figure*}
\begin{corollary}
\label{cor:Commutability_graph}
A set $X \subset G_n$ is state-independently $\mathrm{AvN}$ in a partial closure if and only if it contains a subset consisting of four operators whose commutability graph has one of the forms given in Fig.~\ref{fig:Commutability_graph} (up to permutations of the operators).
\end{corollary}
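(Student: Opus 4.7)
My plan is to derive the corollary directly from Theorem~\ref{thm:KL_contexutality}, which characterises state-independent AvN in a partial closure by the existence of two determining trees over $X$ with opposite-sign roots $x, -x$ and a common determining set. I would handle sufficiency by an explicit construction of such a pair of trees from the four operators, and necessity by a localisation argument in the symplectic representation of the Pauli group.

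For sufficiency, suppose $X$ contains four operators $A, B, C, D$ realising the canonical form in Fig.~\ref{fig:Commutability_graph}: the $4$-cycle commutability graph with commuting pairs $\{A,B\}, \{B,D\}, \{D,C\}, \{C,A\}$ and anti-commuting diagonals $\{A,D\}, \{B,C\}$. I would set $x := (A \cdot B) \cdot (C \cdot D)$, which lies in $\overline{X}$ since $A \smallfrown B$, $C \smallfrown D$, and $AB \smallfrown CD$ (the last holds because the four cross-pairs between $\{A,B\}$ and $\{C,D\}$ contain an even number of anti-commuting ones). The alternative grouping $(A \cdot C) \cdot (B \cdot D)$ is valid for the same parity reason, and the computation $(AC)(BD) = A (CB) D = -ABCD = -x$ follows from the single swap $CB = -BC$. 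Hence the two determining trees with leaves $\{A, B, C, D\}$ grouped as $(AB)(CD)$ and $(AC)(BD)$ share the determining set $\{A, B, C, D\}$ but have opposite-sign roots, so Theorem~\ref{thm:KL_contexutality} yields state-independent AvN in a partial closure. The remaining forms in Fig.~\ref{fig:Commutability_graph} (obtained by relabelling or by the dual anti-commutation-edge representation) admit analogous constructions.

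For necessity, suppose $X$ is state-independently AvN in a partial closure. Theorem~\ref{thm:KL_contexutality} then provides determining trees $\tau_x, \tau_{-x}'$ over $X$ with $D(\tau_x) = D(\tau_{-x}') = D$. My plan is to pass to the symplectic encoding $G_n/\{\pm I, \pm iI\} \cong \mathbb{F}_2^{2n}$, in which each $P \in G_n$ corresponds to a vector $v_P$ and commutation is captured by the vanishing of the symplectic form. Both trees on the common leaf set yield the same vector sum $v_x = \sum_{P \in D} v_P$, so the opposite root signs must originate entirely from the tree-induced orderings, encoding a nontrivial $\mathbb{F}_2$-valued cocycle of pairwise anti-commutation swaps among the leaves. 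Minimising the support of this cocycle should pin down four operators in $D$ whose pairwise commutability matches one of the graphs in Fig.~\ref{fig:Commutability_graph}.

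The main obstacle is this localisation step in the necessity direction. Distinct tree shapes can produce the same root sign via different accumulations of commutation swaps, so naive induction on tree depth does not cleanly isolate four canonical operators. The cleanest route I foresee is to reformulate the sign difference between $\tau_x$ and $\tau_{-x}'$ as an $\mathbb{F}_2$-valued expression in the symplectic pairings among leaves of $D$, and then argue that any nonzero such expression must contain two disjoint anti-commuting pairs as its minimal support — whose four supporting operators realise the required commutability graph up to permutation of vertices.
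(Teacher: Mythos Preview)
The paper does not supply its own proof of this corollary: immediately after the statement it writes ``We refer to Ref.~\onlinecite{KL19} for the proof of corollary~\ref{cor:Commutability_graph}.'' So there is no in-paper argument to compare your proposal against; what follows that sentence is a sketch of the \emph{converse} direction of Theorem~\ref{thm:KL_contexutality}, not of the corollary.

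That said, two points about your plan are worth flagging. First, your necessity argument invokes the forward direction of Theorem~\ref{thm:KL_contexutality} (state-independent AvN in a partial closure $\Rightarrow$ existence of the pair of determining trees), but the paper explicitly declines to prove that direction: ``the forward direction of the theorem is not really obvious \ldots\ Here we leave this proof for our future work.'' So within the paper's logical economy you are leaning on an unestablished implication; if you want a self-contained argument you must either prove that direction or bypass it and work directly from the inconsistency of $\mathbb{T}_{\mathbb{Z}_2}(\overline{X})$.

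Second, even granting Theorem~\ref{thm:KL_contexutality}, the localisation step you identify as the ``main obstacle'' is genuinely the hard part, and your sketch does not yet close it. Writing the sign discrepancy as an $\mathbb{F}_2$ sum of symplectic pairings over the leaves is fine, but the claim that any nonzero such expression has minimal support given by two disjoint anti-commuting pairs needs an argument: a single anti-commuting pair already gives a nonzero contribution, and the constraint that both bracketings must be \emph{legal} determining trees (all siblings pairwise commuting) is what forces the four-operator pattern rather than something smaller or degenerate. You will need to track that tree-validity constraint explicitly when minimising, not just the raw parity of swaps. Your sufficiency direction, by contrast, is clean and correct.
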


We refer to Ref.~\onlinecite{KL19} for the proof of corollary~\ref{cor:Commutability_graph}. The conversed direction of theorem~\ref{thm:KL_contexutality} is easily confirmed. Suppose that there exists such a pair of determining trees, and suppose $g \in \mathcal{E}(X)$ induces a valid $g_\mathcal{T} \in \mathcal{E}(\overline{X})$ such that $\mathcal{T}_x = \tau_x$ and $\mathcal{T}_{-x} = \tau_{-x}'$. Then $g_\mathcal{T}(x) = \sum_{x' \in D(\tau_x)}g(x') = \sum_{x' \in D(\tau_{-x}')}g(x') = g_\mathcal{T}(-x)$, since $D(\tau_x) = D(\tau_{-x}')$. However, since $\pm x \in \overline{X}$, $\{ \pm I, \pm x \} \in \overline{\mathcal{M}}$, we have an equation $x (-x) = -I$, which leads to that $g_\mathcal{T}(x) + g_\mathcal{T}(-x) = 1$, which is a contradiction. Thus, no such assignment $g_\mathcal{T}$ is valid.

On the other hand, the forward direction of the theorem is not really obvious, although we do conjecture it to be true as it was in Ref.~\onlinecite{KL19}. Here we leave this proof for our future work, but we do state it as a theorem rather than a conjecture.

\subsection{Classification of contextuality}

\begin{figure*}
\includegraphics[width=0.7\textwidth]{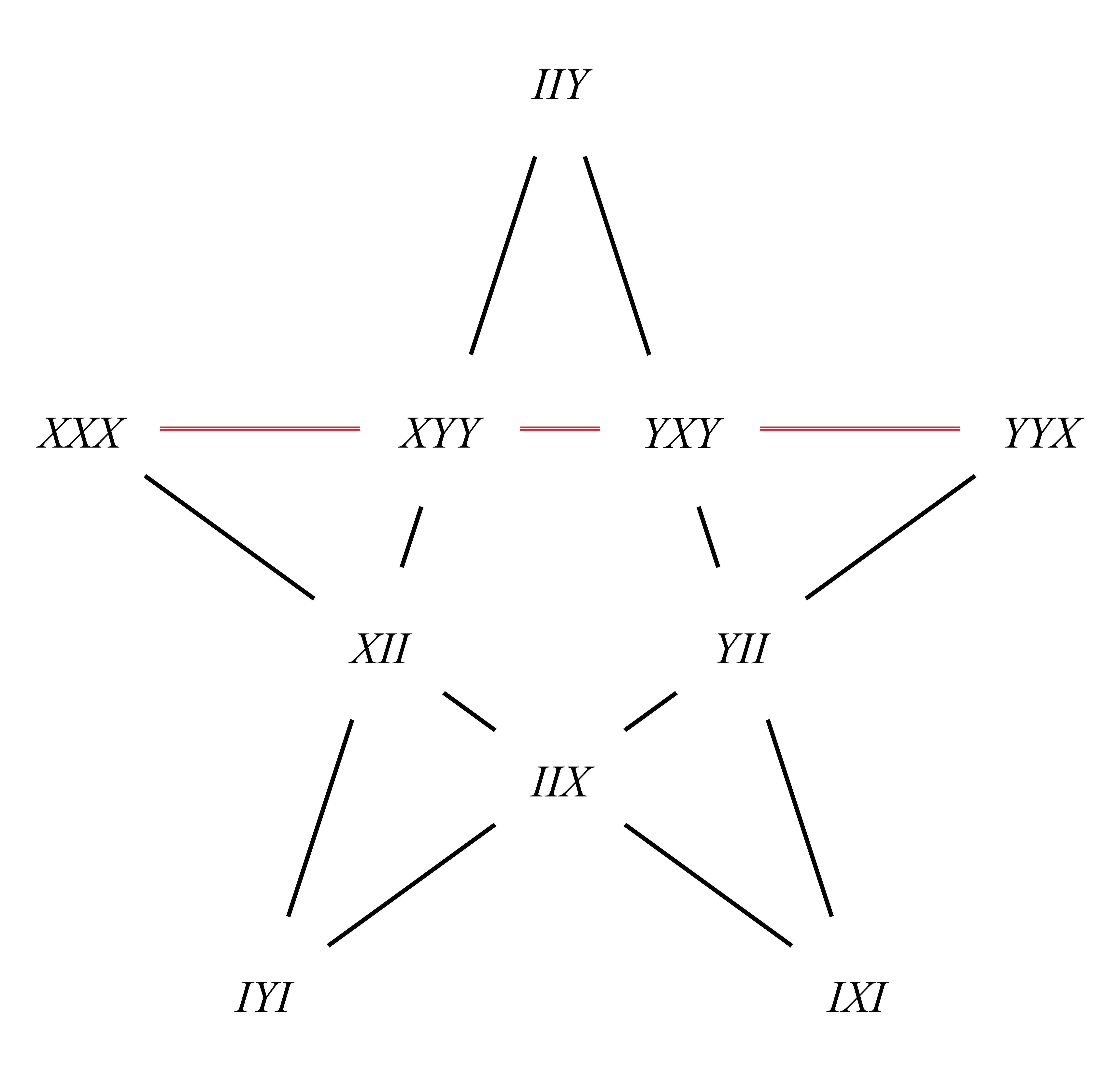}
\caption{\label{fig:Mermins_star}Mermin's star. The observables lying on each line mutually commute, and the product of all four observables equals $+I$ except for the horizontal line being $-I$.}
\end{figure*}
Although a given empirical model is simply contextual, or even non-contextual, it might have AvN contextuality in a partial closure. For example, consider the $\mathrm{XZ}-(2,2,2)$ Bell scenario. This is actually the case that motivated this research, where the Bell scenario with measurements $(XI, ZI, IX, IZ)$ is not contextual at all, but from Mermin's square, it turns out to have state-independent AvN in a partial closure. Hereby, the argument of Ref.~\onlinecite{ABC+18}, that the 2-qubit scenario is unable to show strong non-locality, seems to disagree with this result. However, we could realize that the statement is still correct since it considers non-locality, where only local observables are concerned.

\begin{table*}
\caption{\label{tab:Empirical_GHZ}Empirical model on $XY-(3,2,2)$ scenario realized by GHZ state.}
\begin{ruledtabular}
\begin{tabular}{c|cccccccc}
$C$ & \textbf{000} & \textbf{001} & \textbf{010} & \textbf{011} & \textbf{100} & \textbf{101} & \textbf{110} & \textbf{111} \\
\hline
$\{ XII, IXI, IIX \}$ & 1/4 &     &     & 1/4 &     & 1/4 & 1/4 &     \\
$\{ XII, IYI, IIY \}$ &     & 1/4 & 1/4 &     & 1/4 &     &     & 1/4 \\
$\{ YII, IXI, YII \}$ &     & 1/4 & 1/4 &     & 1/4 &     &     & 1/4 \\
$\{ YII, IYI, IIX \}$ &     & 1/4 & 1/4 &     & 1/4 &     &     & 1/4 \\
       $\vdots$       &     &     &     &\vdots&     &     &     &     
\end{tabular}
\end{ruledtabular}
\end{table*}
\begin{table*}
\caption{\label{tab:Empirical_equal}Empirical model on $XY-(3,2,2)$ scenario realized by equal superposition state.}
\begin{ruledtabular}
\begin{tabular}{c|cccccccc}
$C$ & \textbf{000} & \textbf{001} & \textbf{010} & \textbf{011} & \textbf{100} & \textbf{101} & \textbf{110} & \textbf{111} \\
\hline
$\{ XII, IXI, IIX \}$ & 1/8 & 1/8 & 1/8 & 1/8 & 1/8 & 1/8 & 1/8 & 1/8 \\
$\{ XII, IYI, IIY \}$ & 1/8 & 1/8 & 1/8 & 1/8 & 1/8 & 1/8 & 1/8 & 1/8 \\
$\{ YII, IXI, YII \}$ & 1/8 & 1/8 & 1/8 & 1/8 & 1/8 & 1/8 & 1/8 & 1/8 \\
$\{ YII, IYI, IIX \}$ & 1/8 & 1/8 & 1/8 & 1/8 & 1/8 & 1/8 & 1/8 & 1/8 \\
       $\vdots$       &     &     &     &\vdots&     &     &     &     
\end{tabular}
\end{ruledtabular}
\end{table*}
Another point we make is that the state-independent AvN class is strictly smaller than the ordinary AvN class. In other words, there exist state-dependent AvN models. For example, consider Mermin's star illustrated in Fig.~\ref{fig:Mermins_star}. Table~\ref{tab:Empirical_GHZ} shows the empirical model on $XY-(3,2,2)$ scenario realized by GHZ state, which is AvN. However, in the same measurement setting, we can realize another empirical model as in Table~\ref{tab:Empirical_equal} with equal superposition state $\left| +++ \right>$. Here, every probability in the scenario is $1/8$, which turns out to be non-contextual.

\begin{figure*}
\includegraphics[width=0.8\textwidth]{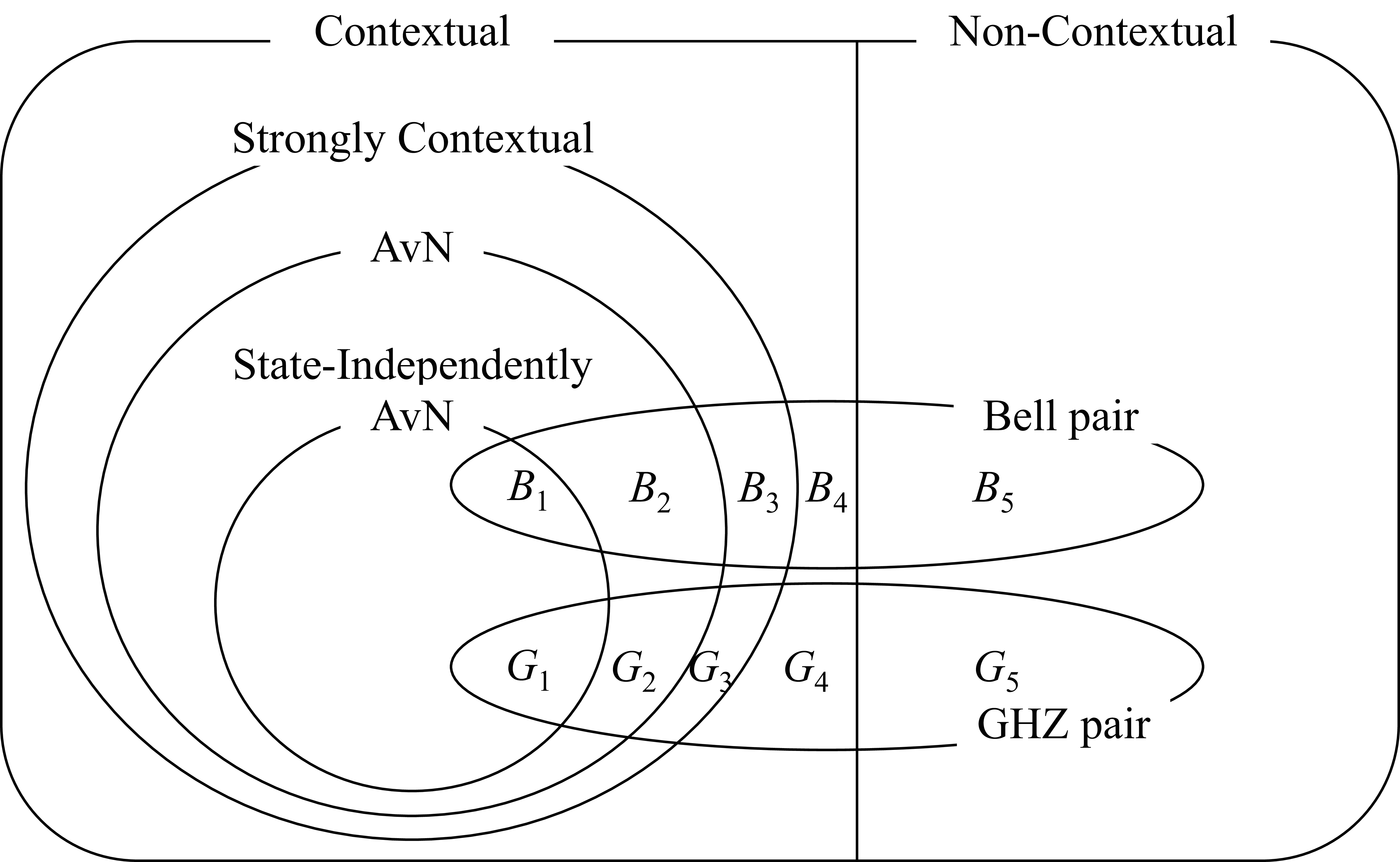}
\caption{\label{fig:State-side_view}State-side view. With the Bell pair, Mermin's square lies in $B_1$, $\mathrm{CHSH}-(2,2,2)$ Bell scenario lies in $B_4$, and $XZ-(2,2,2)$ scenario lies in $B_5$. With the GHZ state, Mermin's star lies in $G_1$, $XY-(3,2,2)$ scenario lies in $G_2$, and $XZ-(3,2,2)$ scenario lies in $G_5$. The areas $B_2, B_3, G_3, G_4$ seem to be abandoned, yet explicit proof is not addressed.}
\end{figure*}
\begin{figure*}
\includegraphics[width=0.8\textwidth]{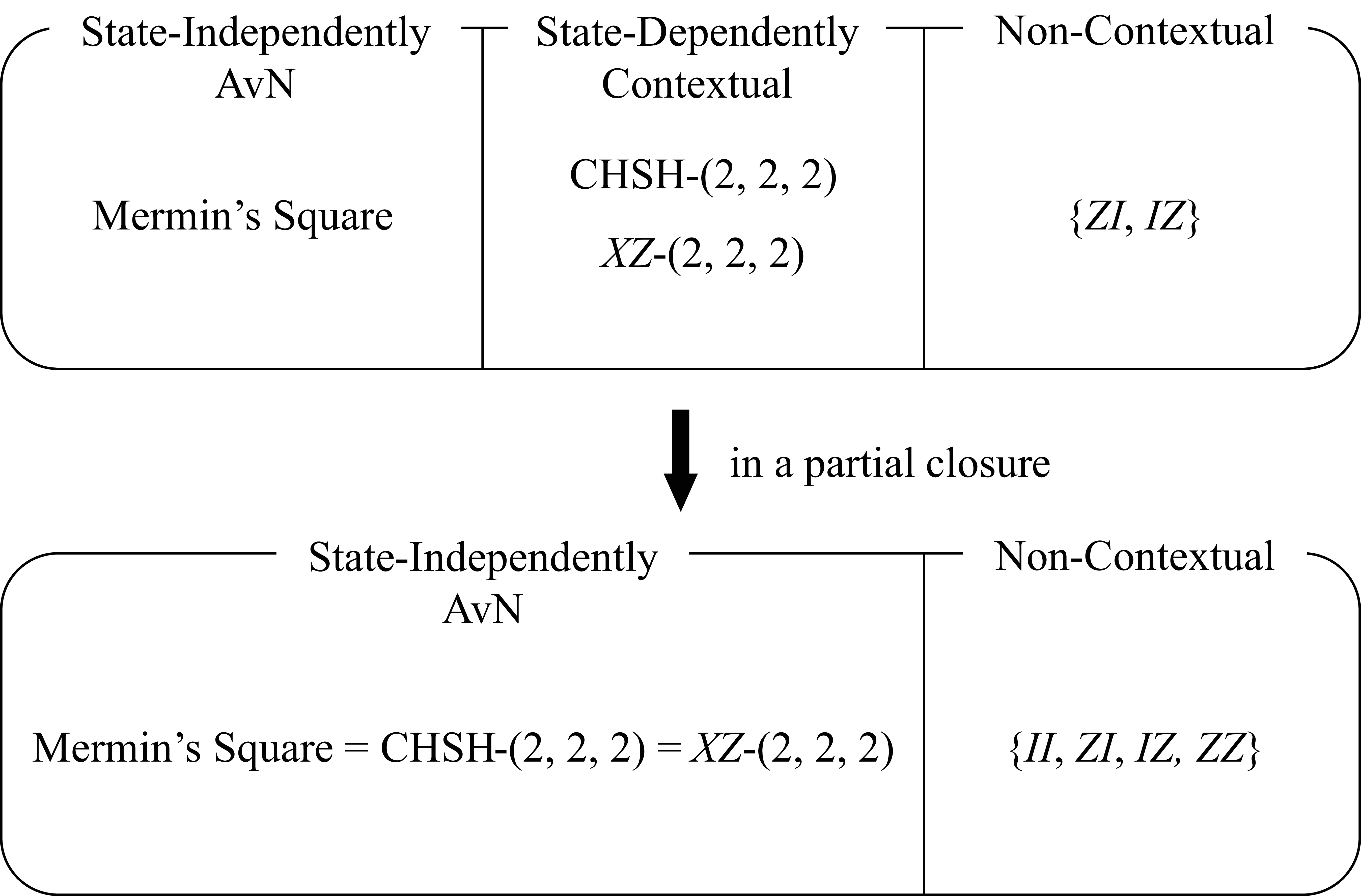}
\caption{\label{fig:Operator-side_view}Operator-side view with conjecture~\ref{conj:AvN_partial_closure}. Regarding 2-qubit scenarios, Mermin's square is state-independently AvN, and $\mathrm{CHSH}-(2,2,2)$ and $XZ-(2,2,2)$ scenarios are state-dependently contextual. The non-contextual example may given by $\mathcal{M} = \{ ZI, IZ \}$. However, in a partial closure, Mermin's square, $\mathrm{CHSH}-(2,2,2)$, and $XZ-(2,2,2)$ are all identical scenarios.}
\end{figure*}
As a result, we can summarize the relationship between different definitions of contextuality as Fig.~\ref{fig:State-side_view}. However, the picture seems quite different when it comes to the operator-side view, as we illustrate in Fig.~\ref{fig:Operator-side_view}. Although we know few cases of state-dependently contextual scenarios, they seem to integrate into the state-independent AvN when their partial closures are considered. In this view, we conjecture that state-dependent contextuality is in fact state-independent AvN in a partial closure. We present this idea in the following conjecture.
\begin{conjecture}
\label{conj:AvN_partial_closure}
Any measurement cover $\mathcal{M}$ that realizes contextual empirical model for some state $\psi$ is state-independently $\mathrm{AvN}$ in a partial closure.
\end{conjecture}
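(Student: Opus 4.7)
The plan is to prove the contrapositive: assume $\mathbb{T}_{\mathbb{Z}_2}(\overline{X})$ is consistent (i.e.\ $X$ is not state-independently AvN in a partial closure) and show that, for every state $\psi$, the empirical model $e_\psi$ on $\mathcal{M}$ is noncontextual. By Theorem~\ref{thm:Hidden_variable} it then suffices to exhibit a factorisable hidden-variable realisation of $e_\psi$.

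First I would introduce the candidate hidden-variable space. Let $V \subset \mathcal{E}(\overline{X})$ consist of all global assignments $g : \overline{X} \to \mathbb{Z}_2$ satisfying every equation of $\mathbb{T}_{\mathbb{Z}_2}(\overline{X})$. Since every constraint is $\mathbb{Z}_2$-linear and, by hypothesis, consistent, $V$ is a non-empty affine subspace of $\mathbb{Z}_2^{\overline{X}}$. Take $\Lambda := V$ and set $h_C^{g} := \delta_{g|_C}$ for every $C \in \mathcal{M}$ and $g \in \Lambda$; this is automatically factorisable and compatible across intersections because restriction of a fixed global assignment is path-independent.

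Next I would supply weights $h_\Lambda(g)$ reproducing the quantum probabilities. For each maximal abelian context $\overline{C} \in \overline{\mathcal{M}}$, simultaneous diagonalisation of the commuting Pauli observables in $\overline{C}$ gives a probability distribution $p_{\overline{C},\psi}$ on joint $\pm 1$-eigenvalue patterns, and the argument used in the proof of the state-independent AvN proposition immediately yields $\mathrm{supp}(p_{\overline{C},\psi}) \subseteq V|_{\overline{C}}$. I would require
\begin{equation*}
\sum_{g \in V,\, g|_{\overline{C}} = s} h_\Lambda(g) \;=\; p_{\overline{C},\psi}(s) \qquad \text{for all } \overline{C} \in \overline{\mathcal{M}},\ s \in V|_{\overline{C}}.
\end{equation*}
Further restriction from $\overline{C}$ down to any $C \in \mathcal{M}$ with $C \subseteq \overline{C}$ then returns the empirical distribution $e_{\psi,C}$, completing the realisation.

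The main obstacle, and the real content of the conjecture, is the existence of such $h_\Lambda \geq 0$ with $\sum_g h_\Lambda(g) = 1$: this is the feasibility of a linear program whose incidence matrix is indexed by the cover $\overline{\mathcal{M}}$ with columns indexed by $V$, constrained by the quantum marginals. I see two complementary strategies. The direct route is an induction on the sequence of one-step extensions by which $\overline{X}$ is generated from $X$, verifying that at each step either the new context is algebraically independent or its overlap with the existing cover is already consistent on $V$, so a feasible $h_\Lambda$ can be extended. The conceptual route is to translate the gluing problem into the \v{C}ech-cohomological framework of Refs.~\onlinecite{AMB12, Aas22} applied to $\overline{\mathcal{M}}$, and show that the obstruction class vanishes precisely when $\mathbb{T}_{\mathbb{Z}_2}(\overline{X})$ is consistent. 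In either approach the delicate point is exploiting the no-signalling compatibility of $p_{\overline{C},\psi}$ and $p_{\overline{C}',\psi}$ on $\overline{C} \cap \overline{C}'$---automatic on the quantum side---to force the required compatibility of $h_\Lambda$ on $V$.
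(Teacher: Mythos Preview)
The paper does not prove this statement: it is explicitly presented as a conjecture, with the authors writing ``Here we leave the proof of this conjecture for our future work.'' So there is no paper proof to compare your proposal against; the question is whether your outline actually closes the problem.

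It does not. You correctly set up the contrapositive and the deterministic hidden-variable skeleton $h_C^g=\delta_{g|_C}$, and you correctly isolate the crux as the existence of nonnegative weights $h_\Lambda$ on $V$ whose marginals reproduce the quantum distributions $p_{\overline C,\psi}$. But at that point your proposal stops being a proof and becomes a restatement of the conjecture: what you are asking for is precisely a global probability distribution for the empirical model on $\overline{\mathcal M}$, i.e.\ noncontextuality of the enlarged model. The hypothesis you have in hand---consistency of $\mathbb{T}_{\mathbb{Z}_2}(\overline X)$---only guarantees that $V\neq\emptyset$, which is a \emph{possibilistic} statement. The entire difficulty of the conjecture is the passage from possibilistic consistency to probabilistic noncontextuality, and that gap is exactly where Hardy-type phenomena live: models that admit global assignments yet admit no global probability distribution. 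You have not explained why the Pauli/partial-closure structure rules this out.

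Neither of your two strategies fills the gap as stated. The inductive route (``extend $h_\Lambda$ one generator at a time'') gives no mechanism for preserving nonnegativity and correct marginals when a newly adjoined product $xy$ creates a fresh maximal context overlapping several old ones; this is a nontrivial gluing problem, not a bookkeeping step. The cohomological route is weaker still: the \v{C}ech obstruction of Refs.~\onlinecite{AMB12,Aas22} is known to be incomplete as a contextuality witness, so ``obstruction vanishes iff $\mathbb{T}_{\mathbb{Z}_2}(\overline X)$ is consistent'' would not, even if true, yield noncontextuality of $e_\psi$. To make progress you would need a structural argument specific to stabiliser formalism---for instance, showing that when $\overline X$ admits a consistent $\mathbb{Z}_2$-valuation the joint spectral projectors across different $\overline C\in\overline{\mathcal M}$ assemble into a single PVM, or invoking a Pauli-specific classical simulability result---rather than a generic LP/cohomology appeal.
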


This conjecture yields an interesting idea that the partial closure may provide a way to connect sheaf-theoretical contextuality to state-independent contextuality. Here we leave the proof of this conjecture for our future work.

\newpage

\section{Conclusion}

In this report, we reviewed the sheaf-theoretic framework of contextuality and proposed state-independent AvN arguments. This work provides a coherent mathematical structure to compare each class of contextuality, clarifying the hierarchy of state-independent AvN - AvN - strong contextuality - contextuality. Kochen-Specker type contextuality integrates into this framework by considering a partial closure of the given set of measurements.

This work also develops the idea that contextuality does not necessarily require measurements to be ``local.'' While the compatibility condition of events and distributions originates in the no-signaling principle, the condition is still valid when we include non-local observables in the set of measurements. However, it requires a cautious approach to deal with non-local observables because of Tsireleson's problem \cite{Tsi87, Vid21, JNV+22}, which is also discussed in Ref.~\onlinecite{CLS17}. Here we restricted our concern to a Pauli $n$-group to avoid this problem.

Whilst we organized a consistent framework of contextuality, it cannot be affirmed that this framework is the most effective framework for formalizing contextuality in every case. A graph-theoretic approach based on Kochen-Specker type contextuality may be suitable for some proofs of quantum advantage, or a topological framework may be more effective in other cases. However, this work, together with Aasnæss's thesis \cite{Aas22}, implies a certain relationship between those approaches, thus enabling the translation of arguments in each framework from one to another.

Future studies may work on proving that state-dependent contextuality merges into state-independent AvN when its partial closure is concerned. It would also be a question if there is a set of observables that is state-independent contextual but not state-independent AvN, or an empirical model that is strongly contextual but not AvN. The generalization of state-independent contextuality to arbitrary self-adjoint operators on a complex Hilbert space would be another intriguing problem. Such discussions will clarify the point where classical and quantum information systems diverge.

In summary, this work presents an extensive approach from a sheaf-theoretic framework to Kochen-Specker type contextuality and state-independent contextuality, providing a consistent mathematical language to compare notions of contextuality. This will serve as a key tool to evaluate quantum advantage, guiding how the contextuality arguments from different frameworks can be translated to each other.

\bibliographystyle{apsrev4-2}
\bibliography{bib}

\begin{thebibliography}{33}%
\makeatletter
\providecommand \@ifxundefined [1]{%
 \@ifx{#1\undefined}
}%
\providecommand \@ifnum [1]{%
 \ifnum #1\expandafter \@firstoftwo
 \else \expandafter \@secondoftwo
 \fi
}%
\providecommand \@ifx [1]{%
 \ifx #1\expandafter \@firstoftwo
 \else \expandafter \@secondoftwo
 \fi
}%
\providecommand \natexlab [1]{#1}%
\providecommand \enquote  [1]{``#1''}%
\providecommand \bibnamefont  [1]{#1}%
\providecommand \bibfnamefont [1]{#1}%
\providecommand \citenamefont [1]{#1}%
\providecommand \href@noop [0]{\@secondoftwo}%
\providecommand \href [0]{\begingroup \@sanitize@url \@href}%
\providecommand \@href[1]{\@@startlink{#1}\@@href}%
\providecommand \@@href[1]{\endgroup#1\@@endlink}%
\providecommand \@sanitize@url [0]{\catcode `\\12\catcode `\$12\catcode `\&12\catcode `\#12\catcode `\^12\catcode `\_12\catcode `\%12\relax}%
\providecommand \@@startlink[1]{}%
\providecommand \@@endlink[0]{}%
\providecommand \url  [0]{\begingroup\@sanitize@url \@url }%
\providecommand \@url [1]{\endgroup\@href {#1}{\urlprefix }}%
\providecommand \urlprefix  [0]{URL }%
\providecommand \Eprint [0]{\href }%
\providecommand \doibase [0]{https://doi.org/}%
\providecommand \selectlanguage [0]{\@gobble}%
\providecommand \bibinfo  [0]{\@secondoftwo}%
\providecommand \bibfield  [0]{\@secondoftwo}%
\providecommand \translation [1]{[#1]}%
\providecommand \BibitemOpen [0]{}%
\providecommand \bibitemStop [0]{}%
\providecommand \bibitemNoStop [0]{.\EOS\space}%
\providecommand \EOS [0]{\spacefactor3000\relax}%
\providecommand \BibitemShut  [1]{\csname bibitem#1\endcsname}%
\let\auto@bib@innerbib\@empty
\bibitem [{\citenamefont {Deutsch}\ and\ \citenamefont {Jozsa}(1992)}]{DJ92}%
  \BibitemOpen
  \bibfield  {author} {\bibinfo {author} {\bibfnamefont {D.}~\bibnamefont {Deutsch}}\ and\ \bibinfo {author} {\bibfnamefont {R.}~\bibnamefont {Jozsa}},\ }\href {https://doi.org/10.1098/rspa.1992.0167} {\bibfield  {journal} {\bibinfo  {journal} {Proceedings of the Royal Society of London. Series A: Mathematical and Physical Sciences}\ }\textbf {\bibinfo {volume} {439}},\ \bibinfo {pages} {553} (\bibinfo {year} {1992})}\BibitemShut {NoStop}%
\bibitem [{\citenamefont {Shor}(1994)}]{Sho94}%
  \BibitemOpen
  \bibfield  {author} {\bibinfo {author} {\bibfnamefont {P.}~\bibnamefont {Shor}},\ }in\ \href {https://doi.org/10.1109/SFCS.1994.365700} {\emph {\bibinfo {booktitle} {Proceedings 35th Annual Symposium on Foundations of Computer Science}}}\ (\bibinfo {year} {1994})\ pp.\ \bibinfo {pages} {124--134}\BibitemShut {NoStop}%
\bibitem [{\citenamefont {Greenberger}\ \emph {et~al.}(1990)\citenamefont {Greenberger}, \citenamefont {Horne}, \citenamefont {Shimony},\ and\ \citenamefont {Zeilinger}}]{GHZ90}%
  \BibitemOpen
  \bibfield  {author} {\bibinfo {author} {\bibfnamefont {D.~M.}\ \bibnamefont {Greenberger}}, \bibinfo {author} {\bibfnamefont {M.~A.}\ \bibnamefont {Horne}}, \bibinfo {author} {\bibfnamefont {A.}~\bibnamefont {Shimony}},\ and\ \bibinfo {author} {\bibfnamefont {A.}~\bibnamefont {Zeilinger}},\ }\href {https://doi.org/10.1119/1.16243} {\bibfield  {journal} {\bibinfo  {journal} {American Journal of Physics}\ }\textbf {\bibinfo {volume} {58}},\ \bibinfo {pages} {1131} (\bibinfo {year} {1990})}\BibitemShut {NoStop}%
\bibitem [{\citenamefont {Cleve}\ \emph {et~al.}(2010)\citenamefont {Cleve}, \citenamefont {Hoyer}, \citenamefont {Toner},\ and\ \citenamefont {Watrous}}]{CHTW10}%
  \BibitemOpen
  \bibfield  {author} {\bibinfo {author} {\bibfnamefont {R.}~\bibnamefont {Cleve}}, \bibinfo {author} {\bibfnamefont {P.}~\bibnamefont {Hoyer}}, \bibinfo {author} {\bibfnamefont {B.}~\bibnamefont {Toner}},\ and\ \bibinfo {author} {\bibfnamefont {J.}~\bibnamefont {Watrous}},\ }\href@noop {} {\bibinfo {title} {Consequences and limits of nonlocal strategies}} (\bibinfo {year} {2010}),\ \Eprint {https://arxiv.org/abs/quant-ph/0404076} {arXiv:quant-ph/0404076 [quant-ph]} \BibitemShut {NoStop}%
\bibitem [{\citenamefont {Grover}(1996)}]{Gro96}%
  \BibitemOpen
  \bibfield  {author} {\bibinfo {author} {\bibfnamefont {L.~K.}\ \bibnamefont {Grover}},\ }in\ \href {https://doi.org/10.1145/237814.237866} {\emph {\bibinfo {booktitle} {Proceedings of the Twenty-Eighth Annual ACM Symposium on Theory of Computing}}},\ \bibinfo {series and number} {STOC '96}\ (\bibinfo  {publisher} {Association for Computing Machinery},\ \bibinfo {address} {New York, NY, USA},\ \bibinfo {year} {1996})\ p.\ \bibinfo {pages} {212–219}\BibitemShut {NoStop}%
\bibitem [{\citenamefont {Peruzzo}\ \emph {et~al.}(2014)\citenamefont {Peruzzo}, \citenamefont {McClean}, \citenamefont {Shadbolt}, \citenamefont {Yung}, \citenamefont {Zhou}, \citenamefont {Love}, \citenamefont {Aspuru-Guzik},\ and\ \citenamefont {O'Brien}}]{PMS+14}%
  \BibitemOpen
  \bibfield  {author} {\bibinfo {author} {\bibfnamefont {A.}~\bibnamefont {Peruzzo}}, \bibinfo {author} {\bibfnamefont {J.}~\bibnamefont {McClean}}, \bibinfo {author} {\bibfnamefont {P.}~\bibnamefont {Shadbolt}}, \bibinfo {author} {\bibfnamefont {M.-H.}\ \bibnamefont {Yung}}, \bibinfo {author} {\bibfnamefont {X.-Q.}\ \bibnamefont {Zhou}}, \bibinfo {author} {\bibfnamefont {P.~J.}\ \bibnamefont {Love}}, \bibinfo {author} {\bibfnamefont {A.}~\bibnamefont {Aspuru-Guzik}},\ and\ \bibinfo {author} {\bibfnamefont {J.~L.}\ \bibnamefont {O'Brien}},\ }\href {https://doi.org/10.1038/ncomms5213} {\bibfield  {journal} {\bibinfo  {journal} {Nature Communications}\ }\textbf {\bibinfo {volume} {5}},\ \bibinfo {pages} {4213} (\bibinfo {year} {2014})}\BibitemShut {NoStop}%
\bibitem [{\citenamefont {Bell}(1964)}]{Bel64}%
  \BibitemOpen
  \bibfield  {author} {\bibinfo {author} {\bibfnamefont {J.~S.}\ \bibnamefont {Bell}},\ }\href {https://doi.org/10.1103/PhysicsPhysiqueFizika.1.195} {\bibfield  {journal} {\bibinfo  {journal} {Physics Physique Fizika}\ }\textbf {\bibinfo {volume} {1}},\ \bibinfo {pages} {195} (\bibinfo {year} {1964})}\BibitemShut {NoStop}%
\bibitem [{\citenamefont {Kochen}\ and\ \citenamefont {Specker}(1975)}]{KS75}%
  \BibitemOpen
  \bibfield  {author} {\bibinfo {author} {\bibfnamefont {S.}~\bibnamefont {Kochen}}\ and\ \bibinfo {author} {\bibfnamefont {E.~P.}\ \bibnamefont {Specker}},\ }\bibinfo {title} {The problem of hidden variables in quantum mechanics},\ in\ \href {https://doi.org/10.1007/978-94-010-1795-4_17} {\emph {\bibinfo {booktitle} {The Logico-Algebraic Approach to Quantum Mechanics: Volume I: Historical Evolution}}},\ \bibinfo {editor} {edited by\ \bibinfo {editor} {\bibfnamefont {C.~A.}\ \bibnamefont {Hooker}}}\ (\bibinfo  {publisher} {Springer Netherlands},\ \bibinfo {address} {Dordrecht},\ \bibinfo {year} {1975})\ pp.\ \bibinfo {pages} {293--328}\BibitemShut {NoStop}%
\bibitem [{\citenamefont {Mermin}(1990)}]{Mer90}%
  \BibitemOpen
  \bibfield  {author} {\bibinfo {author} {\bibfnamefont {N.~D.}\ \bibnamefont {Mermin}},\ }\href {https://doi.org/10.1103/PhysRevLett.65.3373} {\bibfield  {journal} {\bibinfo  {journal} {Phys. Rev. Lett.}\ }\textbf {\bibinfo {volume} {65}},\ \bibinfo {pages} {3373} (\bibinfo {year} {1990})}\BibitemShut {NoStop}%
\bibitem [{\citenamefont {Mermin}(1993)}]{Mer93}%
  \BibitemOpen
  \bibfield  {author} {\bibinfo {author} {\bibfnamefont {N.~D.}\ \bibnamefont {Mermin}},\ }\href {https://doi.org/10.1103/RevModPhys.65.803} {\bibfield  {journal} {\bibinfo  {journal} {Rev. Mod. Phys.}\ }\textbf {\bibinfo {volume} {65}},\ \bibinfo {pages} {803} (\bibinfo {year} {1993})}\BibitemShut {NoStop}%
\bibitem [{\citenamefont {Abramsky}\ and\ \citenamefont {Brandenburger}(2011)}]{AB11}%
  \BibitemOpen
  \bibfield  {author} {\bibinfo {author} {\bibfnamefont {S.}~\bibnamefont {Abramsky}}\ and\ \bibinfo {author} {\bibfnamefont {A.}~\bibnamefont {Brandenburger}},\ }\href {https://doi.org/10.1088/1367-2630/13/11/113036} {\bibfield  {journal} {\bibinfo  {journal} {New Journal of Physics}\ }\textbf {\bibinfo {volume} {13}},\ \bibinfo {pages} {113036} (\bibinfo {year} {2011})}\BibitemShut {NoStop}%
\bibitem [{\citenamefont {Abramsky}\ \emph {et~al.}(2012)\citenamefont {Abramsky}, \citenamefont {Mansfield},\ and\ \citenamefont {Barbosa}}]{AMB12}%
  \BibitemOpen
  \bibfield  {author} {\bibinfo {author} {\bibfnamefont {S.}~\bibnamefont {Abramsky}}, \bibinfo {author} {\bibfnamefont {S.}~\bibnamefont {Mansfield}},\ and\ \bibinfo {author} {\bibfnamefont {R.~S.}\ \bibnamefont {Barbosa}},\ }\href {https://doi.org/10.4204/EPTCS.95.1} {\bibfield  {journal} {\bibinfo  {journal} {Electronic Proceedings in Theoretical Computer Science}\ }\textbf {\bibinfo {volume} {95}},\ \bibinfo {pages} {1} (\bibinfo {year} {2012})}\BibitemShut {NoStop}%
\bibitem [{\citenamefont {Abramsky}(2014)}]{Abr14}%
  \BibitemOpen
  \bibfield  {author} {\bibinfo {author} {\bibfnamefont {S.}~\bibnamefont {Abramsky}},\ }\href@noop {} {\bibinfo {title} {Contextual semantics: From quantum mechanics to logic, databases, constraints, and complexity}} (\bibinfo {year} {2014}),\ \Eprint {https://arxiv.org/abs/1406.7386} {arXiv:1406.7386 [quant-ph]} \BibitemShut {NoStop}%
\bibitem [{\citenamefont {Abramsky}\ \emph {et~al.}(2015)\citenamefont {Abramsky}, \citenamefont {Barbosa}, \citenamefont {Kishida}, \citenamefont {Lal},\ and\ \citenamefont {Mansfield}}]{ABK+15}%
  \BibitemOpen
  \bibfield  {author} {\bibinfo {author} {\bibfnamefont {S.}~\bibnamefont {Abramsky}}, \bibinfo {author} {\bibfnamefont {R.~S.}\ \bibnamefont {Barbosa}}, \bibinfo {author} {\bibfnamefont {K.}~\bibnamefont {Kishida}}, \bibinfo {author} {\bibfnamefont {R.}~\bibnamefont {Lal}},\ and\ \bibinfo {author} {\bibfnamefont {S.}~\bibnamefont {Mansfield}},\ }in\ \href {https://doi.org/10.4230/LIPIcs.CSL.2015.211} {\emph {\bibinfo {booktitle} {24th EACSL Annual Conference on Computer Science Logic (CSL 2015)}}},\ \bibinfo {series} {Leibniz International Proceedings in Informatics (LIPIcs)}, Vol.~\bibinfo {volume} {41},\ \bibinfo {editor} {edited by\ \bibinfo {editor} {\bibfnamefont {S.}~\bibnamefont {Kreutzer}}}\ (\bibinfo  {publisher} {Schloss Dagstuhl--Leibniz-Zentrum fuer Informatik},\ \bibinfo {address} {Dagstuhl, Germany},\ \bibinfo {year} {2015})\ pp.\ \bibinfo {pages} {211--228}\BibitemShut {NoStop}%
\bibitem [{\citenamefont {Abramsky}\ \emph {et~al.}(2017{\natexlab{a}})\citenamefont {Abramsky}, \citenamefont {Barbosa}, \citenamefont {Carù},\ and\ \citenamefont {Perdrix}}]{ABCP17}%
  \BibitemOpen
  \bibfield  {author} {\bibinfo {author} {\bibfnamefont {S.}~\bibnamefont {Abramsky}}, \bibinfo {author} {\bibfnamefont {R.~S.}\ \bibnamefont {Barbosa}}, \bibinfo {author} {\bibfnamefont {G.}~\bibnamefont {Carù}},\ and\ \bibinfo {author} {\bibfnamefont {S.}~\bibnamefont {Perdrix}},\ }\href {https://doi.org/10.1098/rsta.2016.0385} {\bibfield  {journal} {\bibinfo  {journal} {Philosophical Transactions of the Royal Society A: Mathematical, Physical and Engineering Sciences}\ }\textbf {\bibinfo {volume} {375}},\ \bibinfo {pages} {20160385} (\bibinfo {year} {2017}{\natexlab{a}})}\BibitemShut {NoStop}%
\bibitem [{\citenamefont {Abramsky}\ \emph {et~al.}(2017{\natexlab{b}})\citenamefont {Abramsky}, \citenamefont {Barbosa},\ and\ \citenamefont {Mansfield}}]{ABM17}%
  \BibitemOpen
  \bibfield  {author} {\bibinfo {author} {\bibfnamefont {S.}~\bibnamefont {Abramsky}}, \bibinfo {author} {\bibfnamefont {R.~S.}\ \bibnamefont {Barbosa}},\ and\ \bibinfo {author} {\bibfnamefont {S.}~\bibnamefont {Mansfield}},\ }\href {https://doi.org/10.1103/PhysRevLett.119.050504} {\bibfield  {journal} {\bibinfo  {journal} {Phys. Rev. Lett.}\ }\textbf {\bibinfo {volume} {119}},\ \bibinfo {pages} {050504} (\bibinfo {year} {2017}{\natexlab{b}})}\BibitemShut {NoStop}%
\bibitem [{\citenamefont {Abramsky}\ \emph {et~al.}(2018)\citenamefont {Abramsky}, \citenamefont {Barbosa}, \citenamefont {Carù}, \citenamefont {de~Silva}, \citenamefont {Kishida},\ and\ \citenamefont {Mansfield}}]{ABC+18}%
  \BibitemOpen
  \bibfield  {author} {\bibinfo {author} {\bibfnamefont {S.}~\bibnamefont {Abramsky}}, \bibinfo {author} {\bibfnamefont {R.~S.}\ \bibnamefont {Barbosa}}, \bibinfo {author} {\bibfnamefont {G.}~\bibnamefont {Carù}}, \bibinfo {author} {\bibfnamefont {N.}~\bibnamefont {de~Silva}}, \bibinfo {author} {\bibfnamefont {K.}~\bibnamefont {Kishida}},\ and\ \bibinfo {author} {\bibfnamefont {S.}~\bibnamefont {Mansfield}},\ }in\ \href {https://doi.org/10.4230/LIPIcs.TQC.2017.9} {\emph {\bibinfo {booktitle} {12th Conference on the Theory of Quantum Computation, Communication and Cryptography (TQC 2017)}}},\ \bibinfo {series} {Leibniz International Proceedings in Informatics (LIPIcs)}, Vol.~\bibinfo {volume} {73},\ \bibinfo {editor} {edited by\ \bibinfo {editor} {\bibfnamefont {M.~M.}\ \bibnamefont {Wilde}}}\ (\bibinfo  {publisher} {Schloss Dagstuhl--Leibniz-Zentrum fuer Informatik},\ \bibinfo {address} {Dagstuhl, Germany},\ \bibinfo {year} {2018})\ pp.\ \bibinfo {pages} {9:1--9:20}\BibitemShut {NoStop}%
\bibitem [{\citenamefont {Aasnæss}(2022)}]{Aas22}%
  \BibitemOpen
  \bibfield  {author} {\bibinfo {author} {\bibfnamefont {S.}~\bibnamefont {Aasnæss}},\ }\href@noop {} {\bibinfo {title} {Comparing two cohomological obstructions for contextuality, and a generalised construction of quantum advantage with shallow circuits}} (\bibinfo {year} {2022}),\ \Eprint {https://arxiv.org/abs/2212.09382} {arXiv:2212.09382 [quant-ph]} \BibitemShut {NoStop}%
\bibitem [{\citenamefont {Raussendorf}(2013)}]{Rau13}%
  \BibitemOpen
  \bibfield  {author} {\bibinfo {author} {\bibfnamefont {R.}~\bibnamefont {Raussendorf}},\ }\href {https://doi.org/10.1103/PhysRevA.88.022322} {\bibfield  {journal} {\bibinfo  {journal} {Phys. Rev. A}\ }\textbf {\bibinfo {volume} {88}},\ \bibinfo {pages} {022322} (\bibinfo {year} {2013})}\BibitemShut {NoStop}%
\bibitem [{\citenamefont {Bermejo-Vega}\ \emph {et~al.}(2017)\citenamefont {Bermejo-Vega}, \citenamefont {Delfosse}, \citenamefont {Browne}, \citenamefont {Okay},\ and\ \citenamefont {Raussendorf}}]{BDB+17}%
  \BibitemOpen
  \bibfield  {author} {\bibinfo {author} {\bibfnamefont {J.}~\bibnamefont {Bermejo-Vega}}, \bibinfo {author} {\bibfnamefont {N.}~\bibnamefont {Delfosse}}, \bibinfo {author} {\bibfnamefont {D.~E.}\ \bibnamefont {Browne}}, \bibinfo {author} {\bibfnamefont {C.}~\bibnamefont {Okay}},\ and\ \bibinfo {author} {\bibfnamefont {R.}~\bibnamefont {Raussendorf}},\ }\href {https://doi.org/10.1103/PhysRevLett.119.120505} {\bibfield  {journal} {\bibinfo  {journal} {Phys. Rev. Lett.}\ }\textbf {\bibinfo {volume} {119}},\ \bibinfo {pages} {120505} (\bibinfo {year} {2017})}\BibitemShut {NoStop}%
\bibitem [{\citenamefont {Okay}\ \emph {et~al.}(2017)\citenamefont {Okay}, \citenamefont {Roberts}, \citenamefont {Bartlett},\ and\ \citenamefont {Raussendorf}}]{ORBR17}%
  \BibitemOpen
  \bibfield  {author} {\bibinfo {author} {\bibfnamefont {C.}~\bibnamefont {Okay}}, \bibinfo {author} {\bibfnamefont {S.}~\bibnamefont {Roberts}}, \bibinfo {author} {\bibfnamefont {S.~D.}\ \bibnamefont {Bartlett}},\ and\ \bibinfo {author} {\bibfnamefont {R.}~\bibnamefont {Raussendorf}},\ }\href@noop {} {\bibinfo {title} {Topological proofs of contextuality in quantum mechanics}} (\bibinfo {year} {2017}),\ \Eprint {https://arxiv.org/abs/1701.01888} {arXiv:1701.01888 [quant-ph]} \BibitemShut {NoStop}%
\bibitem [{\citenamefont {Cabello}\ \emph {et~al.}(2014)\citenamefont {Cabello}, \citenamefont {Severini},\ and\ \citenamefont {Winter}}]{CSW14}%
  \BibitemOpen
  \bibfield  {author} {\bibinfo {author} {\bibfnamefont {A.}~\bibnamefont {Cabello}}, \bibinfo {author} {\bibfnamefont {S.}~\bibnamefont {Severini}},\ and\ \bibinfo {author} {\bibfnamefont {A.}~\bibnamefont {Winter}},\ }\href {https://doi.org/10.1103/PhysRevLett.112.040401} {\bibfield  {journal} {\bibinfo  {journal} {Phys. Rev. Lett.}\ }\textbf {\bibinfo {volume} {112}},\ \bibinfo {pages} {040401} (\bibinfo {year} {2014})}\BibitemShut {NoStop}%
\bibitem [{\citenamefont {Karanjai}\ \emph {et~al.}(2018)\citenamefont {Karanjai}, \citenamefont {Wallman},\ and\ \citenamefont {Bartlett}}]{KWB18}%
  \BibitemOpen
  \bibfield  {author} {\bibinfo {author} {\bibfnamefont {A.}~\bibnamefont {Karanjai}}, \bibinfo {author} {\bibfnamefont {J.~J.}\ \bibnamefont {Wallman}},\ and\ \bibinfo {author} {\bibfnamefont {S.~D.}\ \bibnamefont {Bartlett}},\ }\href@noop {} {\bibinfo {title} {Contextuality bounds the efficiency of classical simulation of quantum processes}} (\bibinfo {year} {2018}),\ \Eprint {https://arxiv.org/abs/1802.07744} {arXiv:1802.07744 [quant-ph]} \BibitemShut {NoStop}%
\bibitem [{\citenamefont {Kirby}\ and\ \citenamefont {Love}(2019)}]{KL19}%
  \BibitemOpen
  \bibfield  {author} {\bibinfo {author} {\bibfnamefont {W.~M.}\ \bibnamefont {Kirby}}\ and\ \bibinfo {author} {\bibfnamefont {P.~J.}\ \bibnamefont {Love}},\ }\href {https://doi.org/10.1103/PhysRevLett.123.200501} {\bibfield  {journal} {\bibinfo  {journal} {Phys. Rev. Lett.}\ }\textbf {\bibinfo {volume} {123}},\ \bibinfo {pages} {200501} (\bibinfo {year} {2019})}\BibitemShut {NoStop}%
\bibitem [{\citenamefont {Bravyi}\ \emph {et~al.}(2018)\citenamefont {Bravyi}, \citenamefont {Gosset},\ and\ \citenamefont {König}}]{BGK18}%
  \BibitemOpen
  \bibfield  {author} {\bibinfo {author} {\bibfnamefont {S.}~\bibnamefont {Bravyi}}, \bibinfo {author} {\bibfnamefont {D.}~\bibnamefont {Gosset}},\ and\ \bibinfo {author} {\bibfnamefont {R.}~\bibnamefont {König}},\ }\href {https://doi.org/10.1126/science.aar3106} {\bibfield  {journal} {\bibinfo  {journal} {Science}\ }\textbf {\bibinfo {volume} {362}},\ \bibinfo {pages} {308} (\bibinfo {year} {2018})}\BibitemShut {NoStop}%
\bibitem [{\citenamefont {Bravyi}\ \emph {et~al.}(2020)\citenamefont {Bravyi}, \citenamefont {Gosset}, \citenamefont {K{\"o}nig},\ and\ \citenamefont {Tomamichel}}]{BGKT20}%
  \BibitemOpen
  \bibfield  {author} {\bibinfo {author} {\bibfnamefont {S.}~\bibnamefont {Bravyi}}, \bibinfo {author} {\bibfnamefont {D.}~\bibnamefont {Gosset}}, \bibinfo {author} {\bibfnamefont {R.}~\bibnamefont {K{\"o}nig}},\ and\ \bibinfo {author} {\bibfnamefont {M.}~\bibnamefont {Tomamichel}},\ }\href {https://doi.org/10.1038/s41567-020-0948-z} {\bibfield  {journal} {\bibinfo  {journal} {Nature Physics}\ }\textbf {\bibinfo {volume} {16}},\ \bibinfo {pages} {1040} (\bibinfo {year} {2020})}\BibitemShut {NoStop}%
\bibitem [{\citenamefont {Heunen}\ and\ \citenamefont {Vicary}(2019)}]{HV19}%
  \BibitemOpen
  \bibfield  {author} {\bibinfo {author} {\bibfnamefont {C.}~\bibnamefont {Heunen}}\ and\ \bibinfo {author} {\bibfnamefont {J.}~\bibnamefont {Vicary}},\ }\href {https://doi.org/10.1093/oso/9780198739623.001.0001} {\emph {\bibinfo {title} {{Categories for Quantum Theory: An Introduction}}}}\ (\bibinfo  {publisher} {Oxford University Press},\ \bibinfo {year} {2019})\BibitemShut {NoStop}%
\bibitem [{\citenamefont {Hartshorne}(1977)}]{Har77}%
  \BibitemOpen
  \bibfield  {author} {\bibinfo {author} {\bibfnamefont {R.}~\bibnamefont {Hartshorne}},\ }\href {https://doi.org/10.1007/978-1-4757-3849-0} {\emph {\bibinfo {title} {Algebraic geometry}}},\ \bibinfo {series} {Graduate Texts in Mathematics}, Vol.~\bibinfo {volume} {52}\ (\bibinfo  {publisher} {Springer New York, NY},\ \bibinfo {year} {1977})\BibitemShut {NoStop}%
\bibitem [{\citenamefont {Tsirel'son}(1987)}]{Tsi87}%
  \BibitemOpen
  \bibfield  {author} {\bibinfo {author} {\bibfnamefont {B.~S.}\ \bibnamefont {Tsirel'son}},\ }\href {https://doi.org/10.1007/BF01663472} {\bibfield  {journal} {\bibinfo  {journal} {Journal of Soviet Mathematics}\ }\textbf {\bibinfo {volume} {36}},\ \bibinfo {pages} {557} (\bibinfo {year} {1987})}\BibitemShut {NoStop}%
\bibitem [{\citenamefont {Vidick}(2021)}]{Vid21}%
  \BibitemOpen
  \bibfield  {author} {\bibinfo {author} {\bibfnamefont {T.}~\bibnamefont {Vidick}},\ }\href {http://users.cms.caltech.edu/~vidick/notes/ICM.pdf} {\bibinfo {title} {\uppercase{MIP}*=\uppercase{RE}: A negative resolution to \uppercase{C}onnes’ embedding problem and \uppercase{T}sirelson’s problem}} (\bibinfo {year} {2021})\BibitemShut {NoStop}%
\bibitem [{\citenamefont {Ji}\ \emph {et~al.}(2022)\citenamefont {Ji}, \citenamefont {Natarajan}, \citenamefont {Vidick}, \citenamefont {Wright},\ and\ \citenamefont {Yuen}}]{JNV+22}%
  \BibitemOpen
  \bibfield  {author} {\bibinfo {author} {\bibfnamefont {Z.}~\bibnamefont {Ji}}, \bibinfo {author} {\bibfnamefont {A.}~\bibnamefont {Natarajan}}, \bibinfo {author} {\bibfnamefont {T.}~\bibnamefont {Vidick}}, \bibinfo {author} {\bibfnamefont {J.}~\bibnamefont {Wright}},\ and\ \bibinfo {author} {\bibfnamefont {H.}~\bibnamefont {Yuen}},\ }\href@noop {} {\bibinfo {title} {\uppercase{MIP}*=\uppercase{RE}}} (\bibinfo {year} {2022}),\ \Eprint {https://arxiv.org/abs/2001.04383} {arXiv:2001.04383 [quant-ph]} \BibitemShut {NoStop}%
\bibitem [{\citenamefont {Assiry}(2018)}]{Ass18}%
  \BibitemOpen
  \bibfield  {author} {\bibinfo {author} {\bibfnamefont {A.}~\bibnamefont {Assiry}},\ }\href {https://core.ac.uk/download/pdf/200371224.pdf} {\bibinfo {title} {Partial groups}} (\bibinfo {year} {2018})\BibitemShut {NoStop}%
\bibitem [{\citenamefont {Cleve}\ \emph {et~al.}(2017)\citenamefont {Cleve}, \citenamefont {Liu},\ and\ \citenamefont {Slofstra}}]{CLS17}%
  \BibitemOpen
  \bibfield  {author} {\bibinfo {author} {\bibfnamefont {R.}~\bibnamefont {Cleve}}, \bibinfo {author} {\bibfnamefont {L.}~\bibnamefont {Liu}},\ and\ \bibinfo {author} {\bibfnamefont {W.}~\bibnamefont {Slofstra}},\ }\href {https://doi.org/10.1063/1.4973422} {\bibfield  {journal} {\bibinfo  {journal} {Journal of Mathematical Physics}\ }\textbf {\bibinfo {volume} {58}},\ \bibinfo {pages} {012202} (\bibinfo {year} {2017})}\BibitemShut {NoStop}%
\end{thebibliography}%

\end{document}